\newcommand{\vect}[1]{\mathbf{#1}}
\newcommand{\matr}[1]{\mathbf{#1}}
\newcommand{\junk}[1] {}
\def\XXint#1#2#3{{\setbox0=\hbox{$#1{#2#3}{\int}$}
\vcenter{\hbox{$#2#3$}}\kern-.5\wd0}}
\newcommand*\widebar[1]{%
  \hbox{%
    \vbox{%
      \hrule height 0.5pt 
      \kern0.3ex
      \hbox{%
        \kern-0.05em
        \ensuremath{#1}%
        \kern-0.05em
      }%
    }%
  }%
}
\newcommand{\added}[1]{{\color{black}#1}}
\newcommand{\half}{\frac{1}{2}}
\newcommand{\ihat}{\hat{\imath}}
\newcommand{\jhat}{\hat{\jmath}}
\newcommand{\E}{\mathcal{E}}
\newcommand{\dx}{\Delta x}
\newcommand{\dxhat}{\Delta \hat{ x}}
\newcommand{\dyhat}{\Delta \hat{ y}}
\newcommand{\dy}{\Delta y}
\newcommand{\dt}{\Delta t}
\newcommand{\Dl}{\matr{D}_{l}}
\newcommand{\Dlp}{\matr{D}_{l'}}
\newcommand{\eps}{\varepsilon}
\newcommand{\Deb}{\matr{D}_{\varepsilon}}
\newcommand{\Dseb}{ \matr{D}_{\sigma}}
\newcommand{\Shat}{\hat{\matr{S}}}
\newcommand{\T}{\vect{T}}
\newcommand{\Hzhat}{\vect{\hat{H}}_z}
\newcommand{\ESleft}{\frac{\dy}{2} \left( \frac{\eps_x}{\dt} + \frac{\sigma_x}{2} \right)}
\newcommand{\ESright}{\frac{\dy}{2} \left( \frac{\eps_x}{\dt} - \frac{\sigma_x}{2} \right)}
\newcommand{\Ehat}{\hat{\vect{E}}}
\newcommand{\Hhat}{\hat{\vect{H}}}
\newcommand{\B}{\hat{\vect{B}}}
\newcommand{\R}{\hat{\vect{R}}}
\newcommand{\F}{\hat{\vect{F}}}
\newcommand{\Lmatr}{\hat{\vect{L}}}
\newcommand{\xhat}{\hat{\vect{x}}}
\newcommand{\yhat}{\hat{\vect{y}}}
\newcommand{\uhat}{\hat{\vect{u}}}
\newcommand{\Rt}{\tilde{\vect{R}}}
\newcommand{\Ft}{\tilde{\vect{F}}}
\newcommand{\Bt}{\tilde{\vect{B}}}
\newcommand{\Lmatrt}{\tilde{\vect{L}}}
\newcommand{\xt}{\tilde{\vect{x}}}
\newcommand{\Nxhat}{\hat{N}_x}
\newcommand{\Nyhat}{\hat{N}_y}
\newtheorem{theorem}{Theorem}
\definecolor{HV}{rgb}{0,0.3,0}
\newcommand{\hvcolorname}{green}
\definecolor{Hcol}{rgb}{0.7,0,0}
\definecolor{EB}{rgb}{0.7,0,0}
\definecolor{Ecol}{rgb}{0,0,1}
\definecolor{legendGreen}{rgb}{0,0.5,0}
\definecolor{legendRed}{rgb}{1,0,0}
\definecolor{legendBlue}{rgb}{0,0,1}
\definecolor{legendGrey}{rgb}{0.5,0.5,0.5}
\definecolor{legendPink}{rgb}{1 0.2 1}
\definecolor{legendPurple}{rgb}{0.5 0 1}
\definecolor{legendLightgreen}{rgb}{0.47 0.67 0.19}
\begin{document}
%
\title{A Stable FDTD Method with Embedded Reduced-Order Models}

%

\author{ Xinyue~Zhang,~\IEEEmembership{Student Member,~IEEE},
	Fadime~Bekmambetova,~\IEEEmembership{Student Member,~IEEE},  and~Piero~Triverio,~\IEEEmembership{Senior Member,~IEEE}
\thanks{Manuscript received ...; revised ...}%
\thanks{This work was supported in part by the Natural Sciences and Engineering Research Council of
Canada (Discovery grant program) and in part by the Canada Research Chairs program.}
\thanks{X.~Zhang, F.~Bekmambetova, and P.~Triverio are with the Edward S. Rogers Sr. Department of Electrical and Computer Engineering, University of Toronto, Toronto, M5S 3G4 Canada (email:  xinyuezhang.zhang@mail.utoronto.ca, fadime.bekmambetova@mail.utoronto.ca, piero.triverio@utoronto.ca).}
\thanks{\textcopyright 2017 IEEE.  Personal use of this material is permitted.  Permission from IEEE must be obtained for all other uses, in any current or future media, including reprinting/republishing this material for advertising or promotional purposes, creating new collective works, for resale or redistribution to servers or lists, or reuse of any copyrighted component of this work in other works.}
}

\markboth{IEEE Transactions on Antennas and Propagation}%
{Zhang, Bekmambetova, Triverio}
%

\maketitle

\IEEEpeerreviewmaketitle
\begin{abstract} 
The computational efficiency of the Finite-Difference Time-Domain (FDTD) method can be significantly reduced by the presence of complex objects with fine features. Small geometrical details impose a fine mesh and a reduced time step, significantly increasing computational cost. Model order reduction has been proposed as a systematic way to generate compact models for complex objects, that one can then instantiate into a main FDTD mesh. However, the stability of FDTD with embedded reduced models remains an open problem. We propose a systematic method to generate reduced models for FDTD domains, and embed them into a main FDTD mesh with \added{\emph{guaranteed stability} up to the Courant-Friedrichs-Lewy (CFL) limit of the fine mesh. With a simple perturbation technique, the CFL of the whole scheme can be further extended beyond the fine grid's CFL limit}. Reduced models can be created for arbitrary domains containing inhomogeneous and lossy materials. Numerical tests confirm the stability of the proposed method, and its potential to accelerate multiscale FDTD simulations. 
\end{abstract}
\begin{IEEEkeywords}
	Finite-Difference Time-Domain method, model order reduction, subgridding, CFL limit extension, stability, passivity.
\end{IEEEkeywords}

\def\figurename{Fig.}

\section{Introduction}
\label{sec:intro}
The Finite-Difference Time-Domain (FDTD) method\added{, also known as the Yee scheme,} solves Maxwell's equations numerically using a staggered grid, where electric and magnetic fields are sampled with alternate nodes~\cite{yee,Gedney}. This configuration is particularly convenient to discretize the curl operators in Maxwell's equations. An attractive feature of \added{the Yee scheme} is the use of an explicit leap-frog scheme to march in time. This approach significantly reduces the cost per iteration over implicit alternatives, but makes stability conditional to the well-known Courant-Friedrichs-Lewy (CFL) condition~\cite{Gedney}.
In 2D, the CFL condition reads
\begin{equation} \label{eq:cfl_limit}
\dt < \frac{1}{c\sqrt{\frac{1}{\dx^2}+\frac{1}{\dy^2}}} \,,
\end{equation}
where $\dt$ is time step, $\dx$ and $\dy$ are the cell dimensions along the $x$~and $y$ axes, and $c$ is the wave velocity in the medium~\cite{Gedney}. 

Unfortunately, in spite of its low cost per iteration, \added{the Yee scheme} can become quite time-consuming when  applied to large multiscale problems with detailed objects. Two issues contribute to the increase of \added{the computational cost of the Yee scheme}:
\begin{enumerate}
\item[$a)$] small details impose a refined mesh. This increases memory consumption and cost per iteration;
\item[$b)$] a refined mesh implies a shorter time step because of CFL limit~\eqref{eq:cfl_limit}. 
\end{enumerate}

Numerous solutions have been devised to increase the efficiency \added{of the Yee scheme} in handling complex objects. Subcell models have been proposed for wires~\cite{boonzaaler1992radiation}, wire bundles~\cite{berenger2000multiwire}, sheets~\cite{maloney1992efficient}, slots~\cite{ma1997comparison}, and other common structures. These models are typically efficient, but restricted to specific objects. In FDTD subgridding, a fine mesh is used to resolve small objects, while a coarse grid is used elsewhere~\cite{thoma1996consistent,xiao2007three}. Subgridding reduces the growth of the number of unknowns caused by mesh refinement (issue $a)$ in the list above). However, it does not help with the second issue, since the fine grid imposes its CFL limit on the whole domain. While different time steps can be used in the coarse and fine grids, ensuring the stability and accuracy of the resulting scheme is quite challenging~\cite{wang2010analysis}. 

\added{One of the most significant approaches to bypass the CFL limit is represented by implicit methods, such as the Alternating-Direction Implicit FDTD (ADI-FDTD) method~\cite{adi-fdtd-2, adi-fdtd,lee2004some}. Being unconditionally stable for any time step, implicit methods are particularly attractive when combined with FDTD subgridding. In the schemes proposed in~\cite{WangADIFDTDsubgridding,ChenADIFDTDsub3D,YangADIFDTDsub2D}, the fine grid is updated with ADI-FDTD, while the coarse grid is updated with the Yee scheme, taking advantage of the respective strengths of implicit and explicit methods. This hybrid ADI-FDTD subgridding scheme allows a larger time step throughout the entire domain, but leads to an increase of the cost per iteration, as our results will show.}

Model order reduction (MOR) has been proposed to generate compact FDTD models for complex objects~\cite{cangellaris1999rapid,denecker,kulas2003reduced}. The underlying idea is to initially use a fine mesh to resolve complex objects, and then compress the FDTD equations using MOR. The reduced model is finally embedded into the surrounding coarse grid~\cite{denecker}. Typically, the insertion of the reduced model in the coarse grid reduces the CFL limit. However, because of its low order, the reduced model can be manipulated to extend the CFL limit, and enable the use of a larger time step throughout the entire computational domain~\cite{cnf-2015-nemo-fdtdmor,SpatialFiltering,DJiao}. Therefore, an MOR-based approach can be used to tackle both issues $a)$ and $b)$ caused by mesh refinement.

While MOR clearly has the potential to accelerate multiscale FDTD simulations, its use has been so far limited by stability considerations. When a reduced model is embedded into a main FDTD grid, it can easily lead to instability. Unfortunately, this happens even if the reduced model, by itself, is stable~\cite{denecker}. Kulas and Mrozowski~\cite{kulas2004stability} investigated the stability of an FDTD scheme with \added{reduced-order models}, and derived a reciprocity condition for stability. An expression for the CFL limit of the resulting scheme is also provided. There are two main limitations in this work. First, the derivation is valid only for the lossless case. Second, the expression for the CFL limit requires the norm of a very large matrix. For realistic problems, such norm cannot be computed. An upper bound can be derived, but leads to a conservative CFL limit. Moreover, the estimation becomes increasingly involved when multiple reduced models are embedded into the main grid.

In this paper, we propose a systematic theory to generate reduced models for FDTD regions, and couple them to a main FDTD grid \added{in a stable way. 
In the proposed method, the regions with complex objects are initially meshed with a refined grid. The FDTD equations for those regions are written in the form of a discrete-time dynamical system with suitable inputs and outputs, which is then compressed with MOR. The obtained reduced model is finally embedded into the surrounding coarse grid. The stability of the final scheme is guaranteed by construction up to the CFL limit of the fine grid, as we rigorously prove. We also show that the stability limit of the proposed scheme can be extended beyond the fine grid's CFL limit with a simple perturbation of the reduced model coefficients. Some preliminary results about the proposed method were presented in~\cite{cnf-2016-epeps-fdtdmor}.}

\added{There are several novel aspects in this work:
\begin{enumerate}
	\item we propose the first MOR method for the Yee scheme whose reduced models are guaranteed to be stable even when coupled to another FDTD grid. This feature enables the application of MOR to subdomains of a large problem, for much higher scalability. Previous MOR methods with guaranteed stability~\cite{cnf-2014-ims-fdtd,cnf-2014-aps-fdtd,jnl-2014-tmtt-fdtd} can only be applied to the entire domain, which is not feasible for large problems;
	\item stability is guaranteed by construction and rigorously proved. The proof is valid for an arbitrary number of reduced models embedded in a surrounding grid. This feature is a significant advantage over previous works that do not provide a theoretical stability analysis~\cite{WangADIFDTDsubgridding,SpatialFiltering,cnf-2015-nemo-fdtdmor}, or neglect losses and give stability conditions that become increasingly complicated as the number of reduced models grows~\cite{kulas2004stability}; 
	\item the CFL limit of the resulting scheme is known a priori, and is at least equal to the CFL limit of the fine grid;
	\item the CFL limit of the final scheme can be extended beyond the fine grid's limit, while requiring only explicit update operations at runtime.
\end{enumerate}
}
The numerical examples in Sec.~\ref{sec:numerical_examples} confirm the stability of the proposed method, \added{and demonstrate its computational efficiency for multiscale problems in comparison to the Yee scheme, FDTD subgridding, and an ADI-FDTD subgridding scheme.}

\section{FDTD Equations for the Fine Grid as a Discrete-Time Dynamical System}
\label{sec:fdtdsystem}

We consider a multiscale scenario where several objects with fine features are present in a large two-dimensional domain. As in subgridding, we use a fine mesh in the regions with fine features. A coarse mesh is used in the rest of the domain. Without loss of generality, we consider the case of a single region with fine mesh. The extension to multiple regions is straightforward since the same process can be independently applied to each refined region. The cell dimensions in the coarse grid are denoted as~$\dx$ and~$\dy$. In the fine region, the cell dimensions are
\begin{gather}
	\dxhat =  \frac{\dx}{r}\,, \qquad
	\dyhat = \frac{\dy}{r}\,,
\end{gather}
where the refinement factor $r>1$ is an integer. Throughout the paper, we use symbols with a hat for the quantities associated with the fine grid, and symbols with no hat for the quantities associated with the coarse grid. 
In order to later apply MOR, we cast the FDTD equations for the fine region in the form of a discrete-time dynamical system, having the E and H field tangential to the region boundaries as outputs and inputs~\cite{bekmambetova2016dissipative}. These variables will allow us to reconnect the reduced model of the fine grid to the main grid.

\subsection{FDTD Update Equations for the Nodes of the Fine Mesh}

We consider a rectangular fine mesh with $\Nxhat$ cells along $x$ and $\Nyhat$ cells along $y$. A graphical illustration of a simple $2 \times 2$ region is provided in Fig.~\ref{fig:cell}. We consider a TE$_z$ mode with field components $E_x$, $E_y$ and $H_z$. In addition to the standard field samples used in FDTD, we also take as variables the magnetic field on the four boundaries of the region, as shown in Fig.~\ref{fig:cell}. These samples are called hanging variables~\cite{venkatarayalu2007stable}, and serve two purposes. First, they allow us to write a self-contained mathematical model for the region, which does not involve any field sample beyond its boundaries~\cite{bekmambetova2016dissipative}. Second, hanging variables will facilitate the connection of the fine region model to the surrounding coarse grid.

\begin{figure}[t]
	\centering
	\begin{tikzpicture}
	[sgrid/.style={dotted},
	arrE/.style={very thick,->,Ecol,>=stealth,shorten >=7pt, shorten <=7pt},
	labelH/.style={font=\fontsize{\sizefont}{\sizefont}\selectfont, Hcol},
	labelHV/.style={font=\fontsize{\sizefont}{\sizefont}\selectfont, HV},
	arr/.style={thick,->,>=stealth,shorten >=7pt, shorten <=7pt},
	labelE/.style={font=\fontsize{\sizefont}{\sizefont}\selectfont, Ecol},
	label/.style={font=\fontsize{\sizefont}{\sizefont}\selectfont},scale = 1.2
	]
	\pgfmathsetmacro{\sizefont} {9};
	\pgfmathsetmacro{\W}{1.4};
	\pgfmathsetmacro{\H}{1.2};
	\pgfmathsetmacro{\extraW}{0.8};
	\pgfmathsetmacro{\extraH}{0.7};
	\pgfmathsetmacro{\Rbig}{0.08};
	\pgfmathsetmacro{\Rsmall}{0.03};
	
	
	\draw[] (0,0) rectangle (2*\W, 2*\H);
	\draw (\W,0) -- (\W,2*\H);
	\draw (2*\W,0) -- (2*\W,2*\H);
	\draw (0,\H) -- (2*\W,\H);
	\draw (0,2*\H) -- (2*\W,2*\H);

	\draw [sgrid] (0, 0.5*\H) -- (1.8*\W,0.5*\H);
	\draw [sgrid] (0, 1.5*\H) -- (1.8*\W,1.5*\H);
	\draw [sgrid] (0.5*\W, 0) -- (0.5*\W,1.8*\H);
	\draw [sgrid] (1.5*\W, 0) -- (1.5*\W,1.8*\H);
	
	\draw [arrE] (0,0) -- (1*\W, 0);
	\draw [arrE] (1*\W,0) -- (2*\W, 0);
	\draw [arrE] (0,\H) -- (1*\W, \H);
	\draw [arrE] (1*\W,\H) -- (2*\W, \H);
	\draw [arrE] (0,0) -- (0, \H);
	\draw [arrE] (\W,0) -- (\W, \H);
	\draw [arrE] (2*\W,0) -- (2*\W, \H);
	\draw [arrE] (0,\H) -- (0, 2*\H);
	\draw [arrE] (\W,\H) -- (\W, 2*\H);
	\draw [arrE] (2*\W,\H) -- (2*\W, 2*\H);
	
	\draw [Hcol, fill=white](\W/2,\H/2) circle (\Rbig);
	\draw [Hcol, fill=white](1.5*\W,\H/2) circle (\Rbig);
	\draw [Hcol, fill=white](\W/2,1.5*\H) circle (\Rbig);
	\draw [Hcol, fill=white](1.5*\W,1.5*\H) circle (\Rbig);
	
	\draw [fill,Hcol] (\W/2,\H/2) circle (\Rsmall);
	\draw [fill,Hcol] (1.5*\W,\H/2) circle (\Rsmall);
	\draw [fill,Hcol] (\W/2,1.5*\H) circle (\Rsmall);
	\draw [fill,Hcol] (1.5*\W,1.5*\H) circle (\Rsmall);
	
	\draw [HV, fill=white](0.5*\W,0*\H) circle (\Rbig);
	\draw [HV, fill=white](1.5*\W,0*\H) circle (\Rbig);
	\draw [HV, fill=white](0*\W,0.5*\H) circle (\Rbig);
	\draw [HV, fill=white](0*\W,1.5*\H) circle (\Rbig);
	\draw [HV, fill=white](0.5*\W,2*\H) circle (\Rbig);
	\draw [HV, fill=white](1.5*\W,2*\H) circle (\Rbig);
	\draw [HV, fill=white](2*\W,0.5*\H) circle (\Rbig);
	\draw [HV, fill=white](2*\W,1.5*\H) circle (\Rbig);
	\draw [HV, fill](0.5*\W,0*\H) circle (\Rsmall);
	\draw [HV, fill](1.5*\W,0*\H) circle (\Rsmall);
	\draw [HV, fill](0*\W,0.5*\H) circle (\Rsmall);
	\draw [HV, fill](0*\W,1.5*\H) circle (\Rsmall);
	\draw [HV, fill](0.5*\W,2*\H) circle (\Rsmall);
	\draw [HV, fill](1.5*\W,2*\H) circle (\Rsmall);
	\draw [HV, fill](2*\W,0.5*\H) circle (\Rsmall);
	\draw [HV, fill](2*\W,1.5*\H) circle (\Rsmall);

	\draw [labelH] (0.5*\W, 0.5*\H) node [below] {\contour{white}{$H_z|_{\frac{3}{2},\frac{3}{2}}$}};
	\draw [labelHV](0.35*\W,0*\H) node [below] {$H_z|_{\frac{3}{2},1}$};
	\draw [labelHV](0*\W,0.5*\H) node [left] {$H_z|_{1,\frac{3}{2}}$};
	

	\draw [labelE](0.95*\W,0*\H) node [below] {$E_x|_{\frac{3}{2},1}$};
	\draw [labelE](0.5*\W,1*\H) node [below] {$E_x|_{\frac{3}{2},2}$};
	\draw [->,>=stealth, Ecol] (-0.3*\W,0.8*\H) -- (-0.05*\W,0.7*\H);
	\draw [labelE](-0.3*\W,0.9*\H) node [left] {$E_y|_{1,\frac{3}{2}}$};

	\draw (0,0) node [left] {$(1,1)$};
	\draw (2*\W, 2*\H) node [above right] {\contour{white}{$(\Nxhat+1, \Nyhat+1)$}};
	
	\draw [<->, >=stealth, shorten >=2pt, shorten <=2pt] (0,2*\H) -- (\W,2*\H);
	\draw [<->, >=stealth, shorten >=2pt, shorten <=2pt] (2.2*\W,0) -- (2.2*\W,\H);
	
	\draw (\W/2,2*\H) node [above] {$\dxhat$};
	\draw (2.2*\W,0.5*\H) node [right] {$\dyhat$};
	
	\draw (2.5*\W+\extraW, 0) node [below left] {South boundary};
	\draw (0, 2*\H+0.5*\extraH) node [above right] {North boundary};
	\draw (0, 2*\H+\extraH) node [above left, rotate=90] {\begin{tabular}{c} West\\boundary\end{tabular}};
	\draw (2*\W+\extraW, \H+0*\extraH) node [above, rotate=270] {East boundary};

	\pgfmathsetmacro{\axisOffsetx}{0.7*\W};
	\pgfmathsetmacro{\axisOffsety}{\H};
	\draw[->,black,>=stealth] (2.5*\W+\extraW+\axisOffsetx, 0+\axisOffsety) -- (2.5*\W+\extraW+\axisOffsetx, 0.3*\H+\axisOffsety);
	\draw[->,black,>=stealth] (2.5*\W+\extraW+\axisOffsetx, 0+\axisOffsety) -- (2.5*\W+\extraW+\axisOffsetx + 0.3*\H, 0+\axisOffsety);
	\draw [fill=white] (2.5*\W+\extraW+\axisOffsetx, \axisOffsety) circle (0.05*\W);
	\draw [fill=black] (2.5*\W+\extraW+\axisOffsetx, \axisOffsety) circle (0.02*\W);
	\draw[label] (2.5*\W+\extraW+\axisOffsetx + 0.3*\H, 0+\axisOffsety) node [right] {x};
	\draw[label] (2.5*\W+\extraW+\axisOffsetx, 0+\axisOffsety+ 0.3*\H) node [above] {y};
	\draw[label] (2.5*\W+\extraW+\axisOffsetx, 0+\axisOffsety) node [below left] {z};
	
	\end{tikzpicture}
	\caption{Graphical illustration of a simple $2 \times 2$ fine region. The solid lines denote the primary grid. The \hvcolorname~color denotes the hanging variables introduced on the four boundaries.}
	\label{fig:cell}
\end{figure}
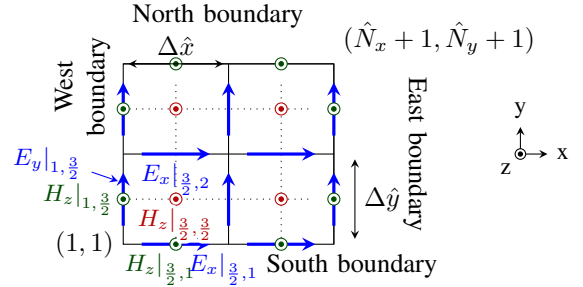
\subsubsection{Update Equations for the $E_x$ and $E_y$ Nodes}

For the~$E_x$~nodes that are strictly inside the fine region, we write a standard FDTD update equation~\cite{Gedney}
\begin{multline}
\dxhat \dyhat \left( \frac{\eps_x}{\dt} + \frac{\sigma_x}{2} \right) E_x|_{i+\half,j}^{n+1} = 
\dxhat \dyhat \left( \frac{\eps_x}{\dt} - \frac{\sigma_x}{2} \right) E_x|_{i+\half,j}^{n} \\
+ \dxhat H_z|_{i+\half,j+\half}^{n+\half} - \dxhat H_z|_{i+\half,j-\half}^{n+\half} \,,
\label{eq:updateEx}
\end{multline}
where $\eps_x$ and $\sigma_x$ are, respectively, the average permittivity and conductivity on the~$x$-oriented edges of the primary grid\footnote{To keep expressions reasonably compact, we do not indicate explicitly the dependence of $\eps_x$, $\sigma_x$, $\eps_y$, $\sigma_y$ and $\mu$ on $i$ and $j$.}. 
For the nodes on the South and North boundaries of the region, a modified FDTD equation must be used~\cite{bekmambetova2016dissipative}. A conventional FDTD equation would otherwise involve magnetic fields beyond the region boundaries, which may not even be available because of the different resolution used in the surrounding coarse grid. This issue can be avoided by using the hanging variables to approximate the spatial derivatives of the magnetic field on the boundaries. For instance, the following update equation is written for the $E_x$ nodes on the South boundary~\cite{bekmambetova2016dissipative}
\begin{multline}
\dxhat \frac{\dyhat}{2} \left( \frac{\eps_x}{\dt} + \frac{\sigma_x}{2} \right) E_x|_{i+\half,1}^{n+1} = \\
\dxhat \frac{\dyhat}{2} \left( \frac{\eps_x}{\dt} - \frac{\sigma_x}{2} \right) E_x|_{i+\half,1}^{n} \\
+ \dxhat H_z|_{i+\half,1+\half}^{n+\half} - \dxhat H_z|_{i+\half,1}^{n+\half}\,.
\label{eq:updateExS}
\end{multline}
A similar equation can be derived to update the~$E_x$~nodes on the North boundary. The same approach is followed to write the update equations for the $E_y$ nodes.

\subsubsection{Update Equation for the $H_z$ Nodes}
Since all magnetic nodes fall strictly inside the fine region, we can use a standard FDTD update for all of them~\cite{Gedney}
\begin{multline}
	\dxhat \dyhat \frac{\mu}{\dt} H_z|_{i+\half, j+\half}^{n+\half} = 
			\dxhat \dyhat \frac{\mu}{\dt} H_z|_{i+\half, j+\half}^{n-\half} \\
			- \dxhat E_x|_{i+\half,j}^n + \dxhat E_x|_{i+\half,j+1}^n \\
			+ \dyhat E_y|_{i,j+\half}^n - \dyhat E_y|_{i+1,j+\half}^n ,
	\label{eq:updateH}
\end{multline}
 where $\mu$ denotes the average permittivity on the edge where $H_z|_{i+\half, j+\half}^{n+\half}$ is sampled.

\subsection{Dynamical System Formulation}

\added{As described in~\cite{bekmambetova2016dissipative}, update equations for all the electric and magnetic fields like~\eqref{eq:updateEx}-\eqref{eq:updateH} in the fine grid can be cast into the form of a discrete-time dynamical system}
\begin{subequations}
\begin{eqnarray}
	(\R + \F) \xhat^{n+1} & = &  (\R - \F) \xhat^{n} + \B\uhat^{n+\half}\,, \label{eq:sys1a} \\
	\yhat^n & = & \Lmatr^T \xhat^n\,.
	\label{eq:sys1b}
\end{eqnarray}
\end{subequations}

\added{The state vector~$\xhat^n$ consists of column vectors~$\Ehat_x^n$, $\Ehat_y^n$ and~$\Hzhat^{n-\half}$, which collect all $E_x$, $E_y$ and~$H_z$ samples in the fine region, respectively}
\begin{equation}
	\xhat^{n} =
	\begin{bmatrix}
		\Ehat_x^n \\
		\Ehat_y^n \\
		\Hhat_z^{n-\half}
	\end{bmatrix}\,.
	\label{eq:x}
\end{equation}

The input~$\uhat^{n+\half}$ and output $\yhat^{n}$ of~\eqref{eq:sys1a}-\eqref{eq:sys1b} contain all magnetic and electric fields on the North, South, West and East boundaries of the fine region
\begin{equation}
\uhat^{n+\half} = 
\begin{bmatrix}
\Hhat_S^{n+\half} \\
\Hhat_N^{n+\half} \\
\Hhat_W^{n+\half} \\
\Hhat_E^{n+\half} \\						
\end{bmatrix} \,,
\quad
\yhat^{n} = 
\begin{bmatrix}
\Ehat_S^{n} \\
\Ehat_N^{n} \\
\Ehat_W^{n} \\
\Ehat_E^{n} \\						
\end{bmatrix}\,.
\end{equation}
We can see that hanging variables are interpreted as the input of the FDTD model for the region, and the collocated electric fields are seen as the output. This input-output interpretation of the fine region equations is needed to apply MOR and, more importantly, to be able to reconnect the reduced model to the coarse grid without losing stability.
\added{Detailed expressions for the coefficient matrices in~\eqref{eq:sys1a}-\eqref{eq:sys1b} can be found in~\cite{bekmambetova2016dissipative}.}

\section{Model Order Reduction}

\label{sec:mor}

Because of grid refinement, the size of $\xhat^n$ can be quite large, and significantly increase CPU time. To mitigate this issue, system~\eqref{eq:sys1a}-\eqref{eq:sys1b} can be compressed using MOR. 
Among the many MOR methods available, we choose SPRIM~\cite{SPRIM}, since it will preserve the block structure of~\eqref{eq:sys1a}-\eqref{eq:sys1b}. Using  the robust Arnoldi iteration, SPRIM creates a projection matrix
\begin{eqnarray}
& \vect{V} = \begin{bmatrix}
\vect{V}_1 &  \vect{0} \\
\vect{0} & \vect{V}_2 \\
\end{bmatrix}
\label{eq:V}
\end{eqnarray}
suitable to approximate the full state vector $\xhat^n$ through a new state vector $\xt^n$ of smaller size
\begin{equation} \label{states}
\xhat^{n} \approx \vect{V}\tilde{\vect{x}}^{n} \,.
\end{equation}
Matrix $\vect{V}$ consists of two blocks $\vect{V}_1$~and~$\vect{V}_2$. The number of rows of $\matr{V}_1$ is equal to the number of electric field samples in $\Ehat_x^n$ and $\Ehat_y^n$. The number of rows in $\matr{V}_2$ is equal to the size of $\Hhat_z^{n-\half}$.
By substituting~\eqref{states} into~\eqref{eq:sys1a}-\eqref{eq:sys1b}, and multiplying the first equation by $\vect{V}^T$ on the left, we obtain the reduced model
\begin{subequations}
	\begin{eqnarray}
	(\Rt + \Ft) \tilde{\vect{x}}^{n+1} & = &  (\Rt - \Ft) \xt^{n} + \tilde{\matr{B}} \uhat^{n+\half}\,, \label{eq:rom1a} \\
	\yhat^n & = & \Lmatrt^T \xt^n\,,
	\label{eq:rom1b}
	\end{eqnarray}
\end{subequations}
where $\Rt = \vect{V}^T \hat{\vect{R}} \vect{V}$ can be written as
\begin{equation}
\Rt = \begin{bmatrix}
\tilde{\vect{R}}_{11}/\dt & -\half \tilde{\vect{K}} \\
-\half \tilde{\vect{K}}^T & \tilde{\vect{R}}_{22}/\dt \\
\end{bmatrix}\,,
\end{equation}
$\Ft = \vect{V}^T \hat{\vect{F}} \vect{V}$, $\Bt =\vect{V}^T \hat{\vect{B}}$, and $\Lmatrt =\vect{V}^T \hat{\vect{L}}$.

The size of reduced model~\eqref{eq:rom1a}-\eqref{eq:rom1b} is typically much lower than the size of~\eqref{eq:sys1a}-\eqref{eq:sys1b}. The size and accuracy of the reduced model can be controlled through the choice of the number of columns in the projection matrix $\matr{V}$.

\section{Incorporation of the Reduced Model into the Main Coarse Grid}

\label{sec:connection}

The goal of this section is to \added{propose a stable way} to couple the reduced model to the coarse grid, while maintaining stability. An improper connection between different domains is indeed a common source of instability. Since the fields on the reduced model boundaries are sampled with a fine resolution, different from the resolution of the coarse grid, a suitable interpolation rule will be introduced. For coupling the reduced model to the coarse grid, we generalize the approach of~\cite{bekmambetova2016dissipative} to handle reduced models. To realize the coupling, we will combine three sets of equations: the update equations for the reduced model~\eqref{eq:rom1a}-\eqref{eq:rom1b}, the update equation for the coarse grid nodes at the interface with the reduced model, and an interpolation rule. 

\subsection{State Equations for the Coarse Fields at the Interface}

We consider the interface shown in Fig.~\ref{fig:connectionA} between the North boundary of the coarse grid, and the South boundary of the reduced model. The other edges can be treated similarly.
For the $E_x$ nodes of the coarse grid that fall at the interface with the reduced model, we introduce hanging variables to facilitate the connection of the reduced model. The hanging variable is the magnetic field collocated with the electric field on the boundary. For example, in the case in Fig.~\ref{fig:connectionA}, the hanging variable is $H_z|_{i+\half, j}$. 
Using this variable, a modified update equation analogous to~\eqref{eq:updateExS} is written for $E_x|_{i+\half, j}$
\begin{multline}
\dx\ESleft E_x|_{i+\half,j}^{n+1} = \label{eq:updateExboundary}
\dx\ESright E_x|_{i+\half,j}^{n} \\
+ \dx H_z|_{i+\half, j}^{n+\half}
- \dx H_z|_{i+\half,j-\half}^{n+\half}
\,.
\end{multline}
In~\eqref{eq:updateExboundary}, $\eps_x$ and $\sigma_x$ are the permittivity and conductivity in the half coarse cell below the interface.
The hanging variable~$H_z|_{i+\half,j}^{n+\half}$ can be considered as the input for the coarse grid. The use of the hanging variable~$H_z|_{i+\half,j}^{n+\half}$~is necessary since a standard FDTD equation for $E_x|_{i+\half,j}^{n+1}$ would involve a magnetic field sample inside the reduced model, which is not available. We will also see in the next section that, through hanging variables, we can systematically ensure stability.

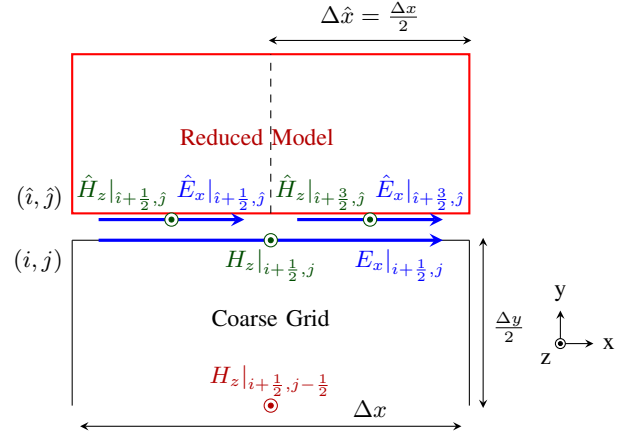
\begin{figure}[t]
	
	\centering
	\begin{tikzpicture}
	[
	scale=1.1,
	arr/.style={very thick,->,Ecol,>=stealth,shorten >=10pt, shorten <=10pt},
	arrb/.style={very thick,->,Ecol,>=stealth,shorten >=10pt, shorten <=10pt},
	labelE/.style={font=\fontsize{\sizefont}{\sizefont}\selectfont, Ecol},
	labelH/.style={font=\fontsize{\sizefont}{\sizefont}\selectfont, Hcol},
	labelHV/.style={font=\fontsize{\sizefont}{\sizefont}\selectfont, HV},
	label/.style={font=\fontsize{\sizefont}{\sizefont}\selectfont}
	]

	\pgfmathsetmacro{\sizefont} {9}
	
	\pgfmathsetmacro{\W}{2.4}
	\pgfmathsetmacro{\H}{2}
	\pgfmathsetmacro{\dH}{0}
	\pgfmathsetmacro{\dHb}{0}
	\pgfmathsetmacro{\Rbig}{0.08}
	\pgfmathsetmacro{\Rsmall}{0.03}
	\pgfmathsetmacro{\extraW}{\H/4}
	\pgfmathsetmacro{\extraH}{\H/4}
	\pgfmathsetmacro{\axisOffsetx}{\W/4}
	\pgfmathsetmacro{\axisOffsety}{-0.75*\H}
	
	\pgfmathsetmacro{\gap}{0.25};

	\draw (0,-\gap) -- (2*\W,-\gap);
	\draw (0,-\gap) -- (0, -1*\H-\gap);
	\draw (2*\W, -\gap) -- (2*\W, -1*\H-\gap);
	\draw [red,thick] (0,0.3*\gap) rectangle (2*\W, \H);
	\draw [labelH] (0.93*\W, 0.5*\H ) node {Reduced Model};
	
	\draw [Hcol, fill=white] (1*\W, -1*\H-\gap) circle (\Rbig);
	
	\draw [fill, Hcol] (1*\W, -1*\H-\gap) circle (\Rsmall);

	\draw[black,dashed] (\W,0.3*\gap) -- (\W,\H);

	
	\draw[labelHV] (0.25*\W,\dH) node [above] {$\hat{H}_z|_{\ihat+\half, \jhat}$};
	\draw[labelHV] (1.25*\W,\dH) node [above] {$\hat{H}_z|_{\ihat+\frac{3}{2}, \jhat}$};
	
	\draw[labelH] (1*\W,-1*\H-\gap) node [above] {$H_z|_{i+\half, j-\half}$};
	\draw[labelHV] (1*\W,-\dHb-\gap) node [below] {$H_z|_{i+\half, j}$};
	
	\draw[labelE] (0.75*\W,\dH) node [above] {$\hat{E}_x|_{\ihat+\half, \jhat}$};
	\draw[labelE] (1.75*\W,\dH) node [above] {$\hat{E}_x|_{\ihat+\frac{3}{2}, \jhat}$};
	\draw[labelE] (1.65*\W,-\dHb-\gap) node [below] {$E_x|_{i+\half, j}$};
	
	\draw[arrb] (0,-\dHb-\gap) -- (2*\W,-\dHb-\gap);
	\draw[arr] (0,\dH) -- (\W,\dH);
	\draw[arr] (\W,\dH) -- (2*\W,\dH);
	
	\draw [HV, fill=white] (0.5*\W, \dH) circle (\Rbig);
	\draw [HV, fill=white] (1.5*\W, \dH) circle (\Rbig);
	\draw [HV, fill=white] (1*\W, -\dHb-\gap) circle (\Rbig);
	
	\draw [fill, HV] (0.5*\W, \dH) circle (\Rsmall);
	\draw [fill, HV] (1.5*\W, \dH) circle (\Rsmall);
	\draw [fill, HV] (1*\W, -\dHb-\gap) circle (\Rsmall);

	\draw[label] (0,0) node [above left] {$(\ihat,\jhat)$};
	\draw[label] (0,-\gap) node [below left] {$(i,j)$};
	
	\draw[label] (\W,-0.7*\H) node [above] {Coarse Grid};
	
	\draw[->,black,>=stealth] (2*\W+\extraH+\axisOffsetx, 0+\axisOffsety) -- (2*\W+\extraH+\axisOffsetx, 0.2*\H+\axisOffsety);
	\draw[->,black,>=stealth] (2*\W+\extraH+\axisOffsetx, 0+\axisOffsety) -- (2*\W+\extraH+\axisOffsetx + 0.2*\H, 0+\axisOffsety);
	\draw [fill=white] (2*\W+\extraH+\axisOffsetx, \axisOffsety) circle (0.025*\W);
	\draw [fill=black] (2*\W+\extraH+\axisOffsetx, \axisOffsety) circle (0.01*\W);
	\draw[label] (2*\W+\extraH+\axisOffsetx + 0.2*\H, 0+\axisOffsety) node [right] {x};
	\draw[label] (2*\W+\extraH+\axisOffsetx, 0+\axisOffsety+ 0.2*\H) node [above] {y};
	\draw[label] (2*\W+\extraH+\axisOffsetx, 0+\axisOffsety) node [below left] {z};
	
	\draw[<->,black,>=stealth, shorten >=3pt, shorten <=3pt] (0.95*\W,\H+\extraH/3) -- (2.05*\W, \H+\extraH/3);	
	\draw[label] (1.5*\W, \H+\extraH/3) node [above] {$\dxhat=\frac{\dx}{2}$};
	\draw[<->,black,>=stealth, shorten >=3pt, shorten <=3pt] (2*\W+\extraW/3,-\H-1.5*\gap) -- (2*\W+\extraW/3, -0.5*\gap);	
	\draw[label] (2*\W+\extraW/3, -1.3*\H/2) node [right] {$\frac{\dy}{2}$};	

	\draw[<->,black,>=stealth, shorten >=3pt, shorten <=3pt] (0,-\H-\extraH/1) -- (2*\W, -\H-\extraH/1);	
	\draw[label] (1.5*\W, -\H-\extraH/1) node [above] {$\dx$};
	
	\end{tikzpicture}
	\caption{Connection scenario considered in Sec.~\ref{sec:connection} for the case of $r=2$. A virtual gap has been inserted between the two subsystems for clarity.}
	\label{fig:connectionA}
\end{figure}

For all edges at the boundary of the coarse grid, a hanging variable and an equation analogous to~\eqref{eq:updateExboundary} is defined. All these update equations can be written in matrix form as

\begin{multline}
\Dl \Dlp \left( \frac{\Deb}{\dt} + \frac{\Dseb}{2} \right) \matr{y}^{n+1} = \Dl \Dlp \left( \frac{\Deb}{\dt} - \frac{\Dseb}{2} \right) \matr{y}^{n} + \\ \Dl \matr{G}^T \matr{H}^{n+\half} - \Dl \matr{G}^T \matr{U}^{n+\half}\,,
\label{eq:updateEBC}
\end{multline}
where $\vect{y}^{n}$ collects all the coarsely-sampled electric fields at the interfaces, and $\Deb$ and $\Dseb$ are the diagonal matrices containing the permittivity and conductivity values in the half cells on the coarse grid boundary. Matrix $\Dl$ is diagonal and contains the length of the coarse grid on the interface. Diagonal matrix $\Dlp$ contains the length of the half edges of the coarse grid that intersect the interface. Matrix $\matr{G}$ is a discrete derivative operator and contains only $-1$, $+1$ and $0$ entries. Vector $\matr{H}^{n+\half}$ contains the magnetic fields in the coarse grid near the interface with the reduced model. Input vector $\vect{U}^{n+\half}$ collects all the hanging variables on the coarse side of the interface, such as $H_z|_{i+\half,j}^{n+\half}$ in Fig.~\ref{fig:connectionA}.
\subsection{Interpolation Rule}

The fields at the boundaries of the coarse mesh and of the reduced model need to be linked by a suitable interpolation rule. This is necessary to satisfy boundary conditions, and because they are sampled with different resolution. We use the interpolation rule in~\cite{bekmambetova2016dissipative}, which will ensure stability, as we shall see in the next section. 
Between the boundary electric fields, we impose the following relation
\begin{equation}
\yhat^n = \T \vect{y}^n \qquad \forall n\,,
\label{eq:equalE}
\end{equation}
where $\T$ is the transformation matrix that sets adjacent coarse and fine grid electric fields to be equal. As discussed in~\cite{bekmambetova2016dissipative}, a reciprocal constraint must be imposed on the magnetic fields on the boundary in order to ensure stability
\begin{equation}
\matr{U}^{n+\half} = \frac{1}{r}\T^T \uhat^{n+\half} \qquad \forall n\,.
\label{eq:averH}
\end{equation}

\subsection{Coupling of Reduced Model and Coarse Grid}
The final update equations for the reduced model are derived by combining the reduced model~\eqref{eq:rom1a}-\eqref{eq:rom1b}, update equation~\eqref{eq:updateEBC} for the boundary fields of the coarse grid, and interpolation rules~\eqref{eq:equalE}~and~\eqref{eq:averH}. First, we substitute~\eqref{eq:averH}~into~\eqref{eq:updateEBC}
\begin{multline}
\Dl \Dlp \left( \frac{\Deb}{\dt} + \frac{\Dseb}{2} \right) \matr{y}^{n+1} =  \Dl\Dlp \left( \frac{\Deb}{\dt} - \frac{\Dseb}{2} \right) \matr{y}^{n} + \\
\Dl \matr{G}^T \matr{H}^{n+\half} - \frac{1}{r} \Dl \matr{G}^T\T^T \uhat^{n+\half}\,.
\label{eq:updateEBC2}
\end{multline}
Then, equation~\eqref{eq:equalE}~is substituted into~\eqref{eq:rom1b}
\begin{equation}
\T\vect{y}^{n+1} = \Lmatrt^T\xt^{n+1} \,.
\label{eq:yandxt}
\end{equation}
Equations~\eqref{eq:rom1a}, \eqref{eq:updateEBC2}, and~\eqref{eq:yandxt} are combined in a linear system
\begin{equation}
\vect{A}_1 \vect{z}^{n+1} = \vect{A}_2 \vect{z}^{n} + \Dl \matr{G}^T\begin{bmatrix}
\matr{H}^{n+\half}\\
\matr{0}\\
\matr{0}\\
\end{bmatrix} \,,
\label{eq:sysform}
\end{equation}
where
\begin{equation}
\vect{z}^n = \begin{bmatrix}
\matr{y}^{n}\\
\uhat^{n-\half} \\
\xt^{n} \\
\end{bmatrix} \,,
\end{equation}

\begin{equation}
\vect{A}_1 = \begin{bmatrix}
\Dl \Dlp \left( \frac{\Deb}{\dt} + \frac{\Dseb}{2} \right) & \frac{1}{r}\Dl\matr{G}^T\matr{T}^T & \vect{0}\\
\vect{T}  &  \vect{0} &-\Lmatrt^T \\
\vect{0} &  -\Bt & (\Rt + \Ft)\\

\end{bmatrix} \,,
\end{equation}
\begin{equation}
\vect{A}_2 = \begin{bmatrix}
\Dl\Dlp \left( \frac{\Deb}{\dt} - \frac{\Dseb}{2} \right) & \matr{0} &  \matr{0} \\
\matr{0} &\matr{0}& \matr{0}\\
\matr{0} & \matr{0} & (\Rt-\Ft)\\

\end{bmatrix} \,.
\end{equation}
Multiplying~\eqref{eq:sysform} on the left by $\vect{A}_1^{-1}$, we obtain
\begin{equation}
\vect{z}^{n+1} = \matr{A}_1^{-1}\matr{A}_2\vect{z}^{n} + \matr{A}_1^{-1}\Dl\matr{G}^T \begin{bmatrix}
\matr{H}^{n+\half}\\
\matr{0}\\
\matr{0}\\
\end{bmatrix} \,.
\label{eq:connExplicit}
\end{equation}
Equation~\eqref{eq:connExplicit} is explicit, and is used to update the state of the reduced model~$\xt^n$, and the electric fields~$\vect{y}^n$ at the interface with the coarse grid. The coefficient matrices in~\eqref{eq:connExplicit} can be precomputed once before runtime. A sparse LU factorization can be performed on $\vect{A}_1$ to efficiently compute its inverse multiplied by $\vect{A}_2$ and $\Dl\vect{G}^T$. 
Ultimately, the proposed scheme consists of conventional FDTD equations to update the coarse grid, and~\eqref{eq:connExplicit} to update the reduced model state and the interface. If multiple reduced models are present, equation~\eqref{eq:connExplicit} is applied to each one of them.

\section{Stability Analysis and CFL Limit Extension}
\label{sec:cflanalysis}

In this section, we analyse the stability of the FDTD scheme with embedded reduced models. We derive the CFL limit of the proposed scheme, and show how it can be extended \added{beyond the fine grid's CFL limit}. We advocate the stability theory proposed in~\cite{bekmambetova2016dissipative,cnf-2016-epeps-fdtdpassivity}, which is based on the concept of energy dissipation. This theory is particularly convenient to investigate the stability of  advanced FDTD setups where different meshes and models are coupled together. 

\subsection{Passivity of the Coarse Grid and the Interpolation Rule}

From the perspective of system theory, the obtained scheme can be seen as the connection of three subsystems: the coarse mesh, the interpolation rule, and the reduced model. If all subsystems are passive, i.e. unable to generate energy on their own, the final scheme will be passive by construction, and thus stable~\cite{bekmambetova2016dissipative,jnl-2007-tadvp-fundamentals}. As proved in~\cite{bekmambetova2016dissipative}, the coarse mesh is passive for any time step satisfying its own CFL limit. Interpolation rule~\eqref{eq:equalE}-\eqref{eq:averH} can be shown to be lossless for any time step~\cite{bekmambetova2016dissipative}. In order to establish the stability of the overall scheme, we have to investigate when the reduced model is passive.

\subsection{Passivity Conditions for the Fine Grid}

The reduced model has been derived from the fine grid model~\eqref{eq:sys1a}-\eqref{eq:sys1b}. In order to be passive, this model must satisfy the following three passivity conditions~\cite{bekmambetova2016dissipative}
\begin{subequations}
	\begin{eqnarray}
	& \R = \R^T >  0\,,  \label{eq:cond1} \\
	& \F + \F^T \ge 0\,, \label{eq:cond2}  \\
	& \B =  \Lmatr \Shat \,, \label{eq:cond3} 
	\end{eqnarray}
\end{subequations}
where
\begin{equation}
\Shat = \Lmatr^T \B = 
\begin{bmatrix}
- \dxhat \matr{I}_{\hat{N}_x}	& \matr{0} 					& \matr{0}  				& \matr{0} \\
\matr{0} 					&  \!\!\!\dxhat \matr{I}_{\hat{N}_x}		& \matr{0}  				& \matr{0} \\
\matr{0}  					& \matr{0} 					& \!\!\!\dyhat \matr{I}_{\hat{N}_y}		& \matr{0} 			\\
\matr{0} 					& \matr{0}  				& \matr{0} 					& \!\!\!- \dyhat \matr{I}_{\hat{N}_y}	\\
\end{bmatrix}\,.
\label{eq:S}
\end{equation}
Condition~\eqref{eq:cond1} can be shown to be a generalization of the CFL limit for a region with inhomogeneous materials~\cite{bekmambetova2016dissipative}. Condition~\eqref{eq:cond2} is satisfied if all conductivities $\sigma_x$ and $\sigma_y$ in the fine region are non-negative, which is usually the case. It can be shown, by direct substitution, that~\eqref{eq:cond3} always holds.  In conclusion, the original model for the fine grid~\eqref{eq:sys1a}-\eqref{eq:sys1b} is passive under the CFL limit of the fine grid~\cite{bekmambetova2016dissipative}.

\subsection{Passivity of the Reduced Model}

We now investigate the passivity of the reduced model~\eqref{eq:rom1a}-\eqref{eq:rom1b} that has been produced for the fine grid. The reduced model will be passive if
\begin{subequations}
	\begin{eqnarray}
	& \Rt = \Rt^T >  0\,,  \label{eq:cond1rom} \\
	& \Ft + \Ft^T \ge 0\,, \label{eq:cond2rom}  \\
	& \Bt =  \Lmatrt \Shat \,. \label{eq:cond3rom} 
	\end{eqnarray}
\end{subequations}
The following theorem shows that \added{the proposed reduction process preserves the passivity of the fine grid model by construction, which is a new result}.
\begin{theorem} \label{thm:passivityrom}
Reduced model~\eqref{eq:rom1a}-\eqref{eq:rom1b} is passive for any $\dt$ that satisfies the CFL limit~\eqref{eq:cond1} of the fine grid.
\end{theorem}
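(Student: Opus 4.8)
The plan is to derive each of the three reduced-model passivity conditions~\eqref{eq:cond1rom}--\eqref{eq:cond3rom} directly from the corresponding fine-grid conditions~\eqref{eq:cond1}--\eqref{eq:cond3}, which are already known to hold under the CFL limit of the fine grid. The enabling observation is that, because of the block-diagonal structure~\eqref{eq:V} of $\vect{V}$, the reduced matrices $\Rt$ and $\Ft$ are plain congruence transformations of $\R$ and $\F$. First I would establish
\begin{equation}
\Rt = \vect{V}^T \R \vect{V}\,, \qquad \Ft = \vect{V}^T \F \vect{V}\,,
\label{eq:congruence}
\end{equation}
which can be read off either by substituting~\eqref{states} into~\eqref{eq:sys1a} and left-multiplying by $\vect{V}^T$, or by block multiplication using the definitions~\eqref{eq:R11_tilde}--\eqref{eq:K_tilde}. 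The coupling block $\tilde{\vect{K}} = \vect{V}_1^T \Khat \vect{V}_2$ in~\eqref{eq:K_tilde} comes out correctly only because $\vect{V}$ respects the electric/magnetic partition of $\xhat^n$; this is exactly the structural property that SPRIM is chosen to preserve.

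With~\eqref{eq:congruence} available, condition~\eqref{eq:cond1rom} is immediate. Symmetry follows from $\Rt^T = \vect{V}^T \R^T \vect{V} = \Rt$ using $\R = \R^T$. For positive definiteness, pick any $\xt \neq \vect{0}$; since $\vect{V}$ has full column rank (guaranteed by the Arnoldi construction of each block), $\vect{V}\xt \neq \vect{0}$, so $\xt^T \Rt \xt = (\vect{V}\xt)^T \R (\vect{V}\xt) > 0$ by~\eqref{eq:cond1}. The same congruence argument handles~\eqref{eq:cond2rom}: from~\eqref{eq:congruence}, $\Ft + \Ft^T = \vect{V}^T (\F + \F^T) \vect{V}$ is positive semidefinite whenever $\F + \F^T \ge 0$, i.e.\ whenever~\eqref{eq:cond2} holds.

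Condition~\eqref{eq:cond3rom} is the only one that is not of congruence type, but it collapses to a one-line identity. Using $\Bt = \vect{V}^T \B$ and $\Lmatrt = \vect{V}^T \Lmatr$ together with the fine-grid factorization $\B = \Lmatr \Shat$ from~\eqref{eq:cond3}, I obtain $\Bt = \vect{V}^T \Lmatr \Shat = \Lmatrt \Shat$, which is~\eqref{eq:cond3rom}; here $\Shat$ is unchanged by the reduction because the boundary sizes $\Nxhat$, $\Nyhat$ and edge lengths $\dxhat$, $\dyhat$ entering~\eqref{eq:S} are untouched. I expect the single genuine obstacle to be the first step~\eqref{eq:congruence}: everything hinges on the block-diagonal form of $\vect{V}$, without which the reduced coupling block would not factor as $\vect{V}_1^T \Khat \vect{V}_2$ and the clean congruence structure would break. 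Once~\eqref{eq:congruence} is in place the three conditions follow essentially for free, and they hold for every $\dt$ within the fine-grid CFL limit precisely because that is the range in which~\eqref{eq:cond1} is valid.
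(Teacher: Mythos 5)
Your proposal is correct and follows essentially the same route as the paper: it identifies $\Rt = \vect{V}^T\R\vect{V}$ and $\Ft = \vect{V}^T\F\vect{V}$ as congruence transformations enabled by the block-diagonal structure of $\vect{V}$, invokes the full rank of $\vect{V}$ to preserve positive (semi)definiteness for conditions~\eqref{eq:cond1rom} and~\eqref{eq:cond2rom}, and obtains~\eqref{eq:cond3rom} by left-multiplying $\B = \Lmatr\Shat$ by $\vect{V}^T$. The only difference is that you spell out the block-multiplication and rank arguments that the paper leaves implicit.
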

\begin{proof}
Since $\matr{V}$ is full rank by construction, and $\Rt = \matr{V}^T \R \matr{V}$, condition~\eqref{eq:cond1} implies~\eqref{eq:cond1rom}.  In the same way, since $\Ft = \matr{V}^T \F \matr{V}$, condition~\eqref{eq:cond2} implies~\eqref{eq:cond2rom}. Finally, by multiplying~\eqref{eq:cond3} on the left by $\matr{V}^T$, we obtain~\eqref{eq:cond3rom}.
\end{proof}

\subsection{Stability of the FDTD Scheme with Reduced Models}

To summarize, we have shown that:
\begin{itemize}
	\item the interpolation rule is lossless, and thus passive, for any $\dt$;
	\item the coarse grid is passive for any $\dt$ under its own CFL limit;
	\item the reduced model of the fine grid is passive for any $\dt$ under the CFL limit of the fine grid.
\end{itemize}

The overall scheme, which results from the connection of these three subsystems, \added{will be therefore passive and stable \emph{by construction} up to the CFL limit of the fine grid because of the following theorem.}
\begin{theorem} \label{thm:passivitycascade}
\added{The system obtained by connecting an arbitrary number of passive systems is passive as well, and does not have any poles outside the unit circle of the complex plane.}
\end{theorem}
\begin{proof}
	\added{See Appendix~\ref{sec:append}}.
\end{proof}
\added{We have provided a formal proof of stability of the proposed FDTD scheme with reduced models. The proof is valid for an arbitrary number of embedded reduced models, since one must simply require that all reduced models are passive. This is a new result, since earlier works on FDTD schemes with embedded reduced models did not provide a rigorous stability proof~\cite{cangellaris1999rapid,denecker,kulas2003reduced}, or neglected losses and gave stability conditions that become increasingly complicated as the number of reduced models grows~\cite{kulas2004stability}. Stability is essential for the adoption of novel FDTD schemes by commercial solvers and industry, where a rigorous guarantee of stability is mandatory. Another important feature of the proposed method is that its CFL limit is known a priori, without requiring the calculation of expensive norms~\cite{kulas2004stability}.}

\subsection{Extension of the CFL Limit}

\added{When $\dt$ exceeds the CFL limit of the finest grid, the corresponding reduced model will not be passive anymore}, and will destabilize the whole scheme. This constraint can be relaxed by extending the CFL limit of the reduced models prior to their connection to the main grid. The extension can be achieved with a simple perturbation of the model coefficients. For a time step beyond the CFL limit of a given reduced model, conditions~\eqref{eq:cond2rom} and~\eqref{eq:cond3rom} will still hold, since they are independent from $\dt$. 
Condition~\eqref{eq:cond1rom} will be violated, but can be restored with a perturbation of $\tilde{\vect{K}}$. Using the Schur's compliment~\cite{Gol96}, one can show that~\eqref{eq:cond1rom} is equivalent to~\cite{jnl-2014-tmtt-fdtd}
\begin{equation}
 s_k \leq \frac{2}{\dt}  \quad \forall k \,.
\label{eq:CFLgeneralized} 
\end{equation}
where $s_k$ are the singular values of $\tilde{\vect{R}}_{11}^{-\half}\tilde{\vect{K}}\tilde{\vect{R}}_{22}^{-\half}$. Above the CFL limit of the reduced model, some singular values will exceed the threshold in~\eqref{eq:CFLgeneralized}. By perturbing $\tilde{\vect{K}}$ with the method in~\cite{jnl-2014-tmtt-fdtd}, condition~\eqref{eq:CFLgeneralized} can be enforced. \added{Numerical tests in Sec.~\ref{sec:numerical_examples} will show that an extension of the CFL limit by 2X or 3X can be typically achieved with minimal impact on accuracy. This enforcement procedure enables the use of a larger time step in the \textit{whole} domain, including the entire coarse grid, and results in a significant saving of CPU time. An interesting aspect is that the CFL limit is enhanced by acting only on the reduced models of the refined grids. This is a novel result compared to~\cite{jnl-2014-tmtt-fdtd}, where MOR and singular value perturbation had to be applied to the equations of the whole system. This feature makes the proposed method more efficient and scalable. Finally, it is remarkable that the proposed scheme achieves stability above the CFL limit of the fine grid without resorting to implicit schemes. Only explicit operations are required at runtime, leading to higher efficiency over implicit alternatives, as numerical results will show.}

\section{Numerical Examples}
\label{sec:numerical_examples}

Several test cases are provided to numerically validate the stability and performance of the proposed method, which was implemented in Matlab with vectorized operations for maximum efficiency. Simulations were run on a computer with a 3.6~GHz CPU and  8~GB of memory.

\subsection{2D Cavity}
\label{sec:ne_stability}

\begin{figure}[t]
	\centering
	\begin{tikzpicture}
	[scale = 0.85,
	arr/.style={->,>=stealth,shorten >=1pt},
	label/.style={font=\fontsize{\sizefont}{\sizefont}\selectfont}
	]
	
	\pgfmathtruncatemacro{\GR}{5};
	\pgfmathtruncatemacro{\dxCmm}{2};
	\pgfmathtruncatemacro{\dyCmm}{2};
	\pgfmathsetmacro{\lenx}{4.5};
	\pgfmathsetmacro{\leny}{4.5};
	\pgfmathsetmacro{\subgrAx}{0.8};
	\pgfmathsetmacro{\subgrAy}{0.8};
	\pgfmathsetmacro{\subgrBx}{1.8};
	\pgfmathsetmacro{\subgrBy}{1.8};
	\pgfmathsetmacro{\curx}{0.4};
	\pgfmathsetmacro{\cury}{0.4};
	\pgfmathsetmacro{\probex}{4.2};
	\pgfmathsetmacro{\probey}{4.2};
	\pgfmathtruncatemacro{\truelenxmm}{1};
	\pgfmathtruncatemacro{\truelenymm}{1};
	\pgfmathtruncatemacro{\subgrlenxmm}{20};
	\pgfmathtruncatemacro{\subgrlenymm}{20};
	
	\pgfmathsetmacro{\Rbig}{0.035}

	\pgfmathsetmacro{\sizefont} {9}
	\draw [very thick] (0, 0) rectangle (\lenx,\leny);
	\draw [->, >=stealth, shorten >=2pt, shorten <=7pt] (\lenx/2-0.5, \leny+0.4) -- (\lenx/2, \leny);
	\draw[label] (\lenx/2-0.5, \leny+0.4) node {PEC};
	\draw [red, fill = red] (\curx,\cury) circle (\Rbig);
	\draw[font=\fontsize{\sizefont}{\sizefont}\selectfont] (\curx,\cury) node [right] {Source};	
	\draw [blue, fill = blue] (\probex,\probey) circle (\Rbig);
	\draw[label] (\probex,\probey) node [left] {Probe};	
	
	\draw [dashed, black, thick] (\subgrAx, \subgrAy) rectangle(\subgrBx, \subgrBy);

	\draw[label] (\lenx*0.7,\leny*0.6) node {\begin{tabular}{c} $\dxhat$ = $\dx/$\GR \\ $\dyhat$ = $\dy/$\GR \end{tabular}};
	\draw[label] (0,\leny) node [below right] {\begin{tabular}{c} $\dx$ = \dxCmm\ cm \\ $\dy$ = \dyCmm\ cm \end{tabular}};
	
	\pgfmathsetmacro{\axisoffs}{\lenx/6}
	\draw[->,black,>=stealth] (\lenx + \axisoffs,\leny/2) -- (\lenx + \lenx/10+\axisoffs,\leny/2);
	\draw[->,black,>=stealth] (\lenx + \axisoffs,\leny/2) -- (\lenx + \axisoffs,\leny/2 + \lenx/10);
	\draw [fill=white] (\lenx + \axisoffs,\leny/2) circle (0.02*\lenx);
	\draw [fill=black] (\lenx + \axisoffs,\leny/2) circle (0.0075*\lenx);
	\draw[label] (\lenx + \axisoffs+\lenx/10,\leny/2) node [right] {x};
	\draw[label] (\lenx + \axisoffs,\leny/2 + \lenx/10) node [above] {y};
	\draw[label] (\lenx + \axisoffs,\leny/2) node [below left] {z};
	
	\draw[<->,black,>=stealth] (0,-0.15) -- (\lenx,-0.15);
	\draw (\lenx/2,-0.15) node [below] {\truelenxmm\ m};
	\draw[<->,black,>=stealth] (-0.15,0) -- (-0.15,\leny);
	\draw (-0.15,\leny/2) node [above, rotate = 90] {\truelenymm\ m};
	
	\draw[<->,black,>=stealth] (\subgrAx,\subgrBy+0.1) -- (\subgrBx,\subgrBy+0.1);
	\draw [label](\subgrAx/2+\subgrBx/2+0.1,\subgrBy+0.1) node [above] {\subgrlenxmm\ cm};
	
	\draw[<->,black,>=stealth] (\subgrAx-0.1,\subgrAy) -- (\subgrAx-0.1,\subgrBy);
	\draw[label] (\subgrAx-0.1,\subgrBy/2+\subgrAy/2) node [above, rotate=90] {\subgrlenymm\ cm};
	
	\end{tikzpicture}
	\caption{Layout of the PEC cavity considered in Sec.~\ref{sec:ne_stability}. The dashed box denotes the refinement region.}
	\label{fig:layout_stability}
\end{figure}
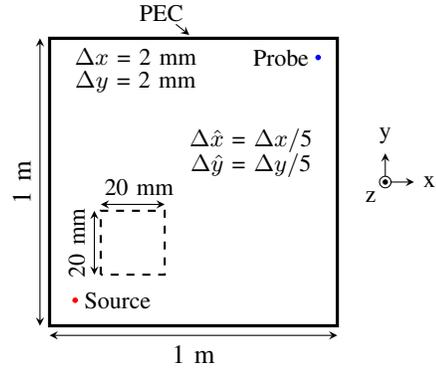
In this section, the proposed method is applied to the empty 1~m~$\times$~1~m 2D cavity in Fig.~\ref{fig:layout_stability}. The cavity has perfect electric conductor (PEC) walls. The whole region is discretized with a coarse mesh with $\dx$~=~$\dy$~=~2 cm, except for a central 0.2~m~$\times$~0.2~m area, discretized with a fine mesh ($r= 5$). The purpose of this test is to verify the stability of the proposed method. The cavity is excited by a Gaussian pulse source with bandwidth of 0.5 GHz. 

Using the proposed technique, a reduced model was created from the FDTD equations of the fine region~\eqref{eq:sys1a}-\eqref{eq:sys1b}. MOR reduced their order from 7,600 to 1,200. The CFL limit of the reduced model was extended by 2X, prior to its embedding into the coarse mesh. The proposed method was run for $10^6$ time steps. The computed magnetic field at the probe, shown in Fig.~\ref{fig:result_stability}, confirms the stability of the proposed method. 

Table~\ref{Cavity_table} gives the CPU time taken by the proposed method, by standard FDTD \added{(Yee scheme)} with an all-coarse mesh, by standard FDTD with an all-fine mesh,  by a subgridding method~\cite{bekmambetova2016dissipative} \added{and by a hybrid one-step leapfrog ADI-FDTD subgridding method~\cite{YangADIFDTDsub2D}}. The proposed method was also tested with MOR enabled but no CFL limit extension, and vice versa. All explicit methods were run at a time step of 99\% of their CFL limit. \added{The proposed method and ADI-FDTD subgridding were run at a time step two times larger than the CFL limit of the fine grid.} In terms of accuracy, all methods are in very good agreement, as shown in Fig.~\ref{fig:Freq_cavity}. Since, in this case, the fine region does not contain any complex object, all methods are expected to give comparable accuracy. In terms of execution time, \added{Table~\ref{Cavity_table} shows that the} proposed method provides a 7.7X speed-up with respect to a complete refinement of the FDTD mesh, a 60\% gain versus subgridding, \added{and is about twice faster than the ADI-FDTD subgridding scheme in~\cite{YangADIFDTDsub2D}}. We can see that both MOR and the extension of the CFL limit of the reduced model contribute to the computational gains of the proposed method.

\begin{figure}[t]
	\centering
	\includegraphics[width=0.95\columnwidth]{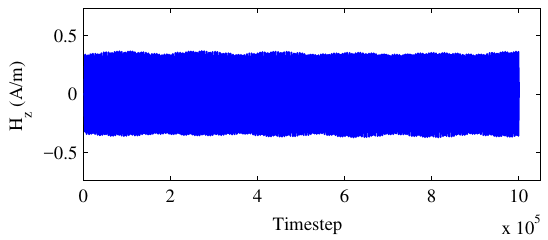}
	\caption{Magnetic field at the probe for the empty cavity of Sec.~\ref{sec:ne_stability}, computed with the proposed method for 10$^6$ time steps.}
	\label{fig:result_stability}
\end{figure}

\begin{table}[t]\centering 
	\centering
	\caption{Execution time for the 2D cavity in Sec.\ref{sec:ne_stability}. The MOR time is included in the runtime. The CFL number is defined as the ratio of the time step $\dt$ used in the simulation and the CFL limit.}
		\begin{tabular}{  l c  r  r }
			\hline
			Method                                 & CFL number  & Runtime   & Speed-up \\ \hline \hline
			\added{Yee scheme}, all-coarse grid                          & $0.99$  &  9.0 s     & - \\ \hline
			\added{Yee scheme}, all-fine grid						      & $0.99$  &  451.2 s    & - \\ \hline	
			Subgridding			                  & $0.99$  &  95.3 s    & 4.7X\\ \hline
			\added{ADI-FDTD Subgridding~\cite{YangADIFDTDsub2D}}			                  & $1.98$  &  146.7 s    & 3.1X\\ \hline
			Proposed (MOR  only)                    & $0.99$  &  131.1 s   & 3.4X\\ \hline 
			Proposed (CFL extension	only)	       & $1.98$  & 340.1 s   & 1.3X\\ \hline 
			Proposed			              & $1.98$  &  58.6 s & 7.7X\\ \hline
		\end{tabular}
	\label{Cavity_table}
\end{table}

\begin{figure}[t]\centering 

%
%
%
%
%
    \centering
    \includegraphics[width=0.95\columnwidth]{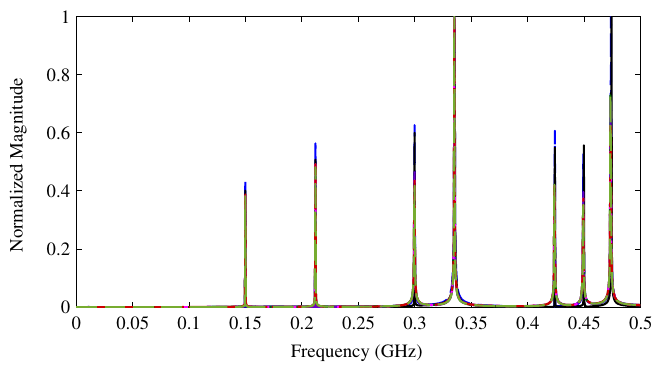}
	
	\caption{Frequency response of the cavity of Sec.~\ref{sec:ne_stability}, obtained with the \added{Yee scheme}  with an all-coarse mesh~(\color{legendBlue} ~\st{$\;\;$}~\st{$\;\;\;$}~\st{$\;\;$}~\color{black}), the \added{Yee scheme}  with an all-fine mesh (\color{black} ~\st{$\;\;\;\;\;\;\;$}~\color{black}), subgridding (\color{legendGreen}~\st{$\;\;\;\;\;\;$}~\color{black}), \added{ADI-FDTD subgridding~\cite{YangADIFDTDsub2D}} (\color{legendLightgreen}~\st{$\;\;\;$}~\st{$\;\;\;$}~\color{black}), the proposed method with only MOR (\color{legendPurple}~\st{$\;\;\;$}~\st{$\;\;\;$}~\color{black}), the proposed method with only CFL extension (\color{legendPink}~\st{$\;$}~\st{$\;$}~\st{$\;$}~\st{$\;$}~\color{black}), and the proposed method with both MOR and CFL extension~(\color{legendRed}~\st{$\;$}~\st{$\;\;\;$}~\st{$\;$}~\color{black}).}
	\label{fig:Freq_cavity}
\end{figure}

\subsection{Waveguide with Irises}
\label{sec:ne_filter}

We consider the 4~m $\times$ 0.7~m waveguide shown in Fig.~\ref{fig:filter}. The waveguide includes two thin PEC irises~\cite{cnf-2015-nemo-fdtdmor}. A coarse mesh is used in most of the domain, with resolution $\dx = \dy =$ 0.05~m. The two areas around the irises were instead meshed with a refined grid ($r=3$) in order to capture the irregular fields caused by the discontinuity. The refined regions are 0.5~m $\times$ 0.5~m wide, and are shown in Fig.~\ref{fig:filter}. In the proposed method, two reduced models were generated for these regions and embedded into the coarse grid. The ends of the waveguide are terminated with a perfectly matched layer (PML) which is 10 coarse cells deep. The waveguide is excited by a Gaussian pulse with 0.4~GHz bandwidth. A probe is located at the other end. All methods were run at 0.99 times their CFL limit. In the proposed method, the CFL limit was extended by two times, and MOR reduced the size of each fine grid model from 2,760 to 648. 

Fig.~\ref{fig:filter_results} compares the frequency response obtained with \added{the Yee scheme}, subgridding~\cite{bekmambetova2016dissipative} and the proposed method. We can clearly see that a coarse mesh does not accurately capture the frequency response of the system, due to its inability to properly resolve the irregular fields around the apertures. The other methods, which utilize a fine grid in the two critical regions, are in very good agreement with an all-fine FDTD simulation. In terms of runtime, Table~\ref{Filter_table} shows that subgridding \added{and ADI-FDTD subgridding~\cite{YangADIFDTDsub2D} are slightly faster than an all-fine FDTD simulation}, and that the proposed method provides the highest speed up among the compared techniques. For this case, the speed-up is more modest than in the previous example, due to the presence of two refined regions and to their size relative to the overall problem. 

\begin{figure}[t]
	\centering
	\includegraphics[scale=0.75]{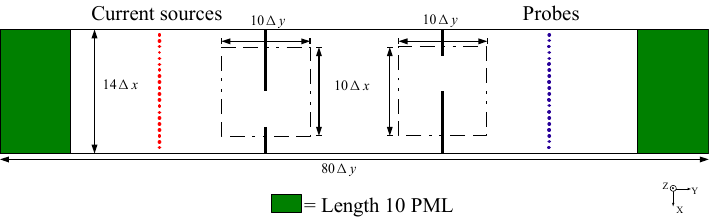}
	\caption{Geometry of the waveguide with irises considered in Sec.~\ref{sec:ne_filter}. The boxes indicate the regions replaced by a reduced model.}
	\label{fig:filter}
\end{figure}

\begin{figure}[t]\centering 
	\includegraphics[scale=0.8]{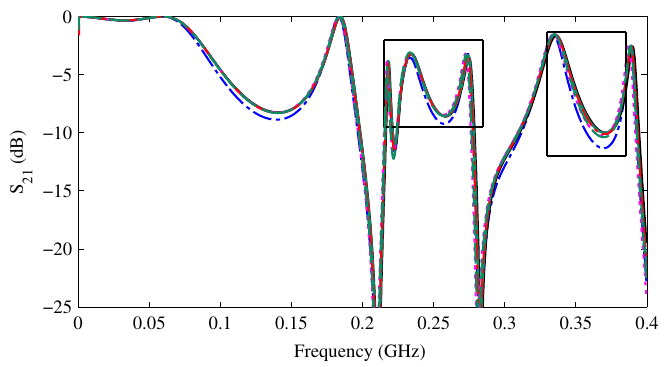}

	\begin{tikzpicture}
	\pgfmathsetmacro{\fraccolwid}{0.5};
	\pgfmathsetmacro{\dist}{4.5};
	
	\node[inner sep=0pt] (whitehead) at (0,0) [below] {\includegraphics[width = \fraccolwid\columnwidth]{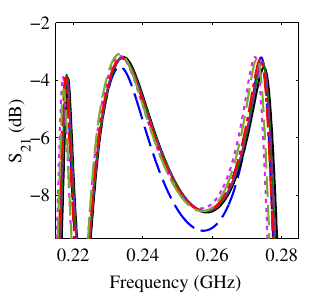}};
	
	\node[inner sep=0pt] (whitehead) at (\dist,0)[below] {\includegraphics[width = \fraccolwid\columnwidth]{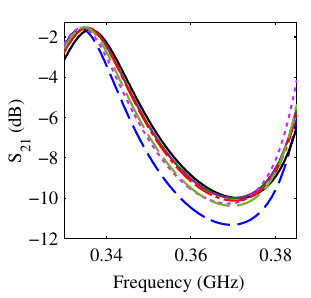}};

	\end{tikzpicture}

\caption{Frequency response of the waveguide of Sec.~\ref{sec:ne_filter}, obtained with \added{the Yee scheme} with an all-coarse mesh~(\color{legendBlue} ~\st{$\;\;$}~\st{$\;\;\;$}~\st{$\;\;$}~\color{black}), \added{the Yee scheme} with an all-fine mesh (\color{black} ~\st{$\;\;\;\;\;\;\;$}~\color{black}), subgridding (\color{legendGreen}~\st{$\;\;\;\;\;\;$}~\color{black}), \added{ADI-FDTD subgridding~\cite{YangADIFDTDsub2D}} (\color{legendLightgreen}~\st{$\;\;\;$}~\st{$\;\;\;$}~\color{black}), proposed method with only MOR~(\color{legendPurple}~\st{$\;\;\;$}~\st{$\;\;\;$}~\color{black}), proposed method with only CFL extension (\color{legendPink}~\st{$\;$}~\st{$\;$}~\st{$\;$}~\st{$\;$}~\color{black}), and proposed method with both MOR and CFL limit extension (\color{legendRed}~\st{$\;$}~\st{$\;\;\;$}~\st{$\;$}~\color{black}).}

	\label{fig:filter_results}
\end{figure}

\begin{table}[t]\centering 
	\centering
	\caption{Execution time for the waveguide structure in Sec.\ref{sec:ne_filter}.}
	\begin{tabular}{  l c c c r r }
		
		\hline 
		
		Method                                     &   CFL number & Runtime & Speed-up \\ \hline \hline
		\added{Yee scheme}, all-coarse grid                                  &  $0.99$  & 28.9 s   & - \\ \hline
		\added{Yee scheme}, all-fine grid  								 &  $0.99$ 	&  281.2 s & - \\ \hline 
		Subgridding                              &  $0.99$  & 222.0 s  & 1.26X  \\ \hline
	\added{ADI-FDTD Subgridding~\cite{YangADIFDTDsub2D}}				                  & $1.98$  & 262.2 s    & 1.07X\\ \hline
		Proposed (MOR	only)		             & $0.99$ 	&  245.4 s  & 1.15X\\ \hline
		Proposed (CFL extension	only)		             & $1.98$   & 484.3 s & 0.58X \\ \hline
		Proposed 			                 &    $1.98$  &  118.5 s & 2.37X\\ \hline

	\end{tabular}
	\label{Filter_table}
\end{table} 

\subsection{Reflection Test}
\label{sec:ne_rods4}

To further investigate the accuracy of the proposed scheme, we analyse the reflections from a scatterer placed in a waveguide. A 66~mm~$\times$~40~mm  waveguide terminated with 15~mm-thick perfectly matched layers on two sides is considered. The layout of the structure is shown  in Fig.~\ref{fig:rods4}. The scatterer is comprised of four small copper rods with 1~mm radius. A line source is used to excite the waveguide and a line probe near the source position is chosen. The structure is uniformly discretized with coarse cells with $\dx=\dy=$ 1~mm.   In the proposed method, the region where the four rods are located is refined with  $r = 6$. All methods were run with a time step equal to 0.99 times their CFL limit. \added{The proposed method and ADI-FDTD subgridding ~\cite{YangADIFDTDsub2D} were run with a time step three times larger than the CFL limit of the fine grid.} In the proposed method, MOR was used to reduce the number of variables in the fine region from 7,008 to 1,920. 

The reflected power measured with all methods is shown in the top panel of Fig.~\ref{fig:result_4rods}. A conventional FDTD run (\added{Yee scheme}) with a coarse grid everywhere overestimates reflections from the rods across the entire frequency range, confirming the need for a finer mesh around the scatterers. The proposed method and subgridding are instead in very good agreement with the reference simulation, which was performed with \added{the Yee scheme} and a fine mesh in the entire computational domain. In order to further assess the accuracy of the proposed method, we finally repeated the analysis of the waveguide without the four rods. The purpose of this test is to measure the level of reflections caused by the transition from the coarse grid to the reduced model of the fine grid. The reflections from the interface obtained with both subgridding and the proposed method are depicted in the bottom panel of Fig.~\ref{fig:result_4rods}. The figure shows that the proposed method can provide, for a given region, a reduced model with enhanced CFL limit that can be seamlessly embedded into an FDTD grid, without resulting in increased reflections or stability issues. 

\added{In terms of performance, the proposed method is 27.5 times} \added{faster than an all-fine FDTD simulation, 2.1 times faster than} \added{subgridding, and 3.8 times faster than ADI-FDTD subgridding, as shown in Table~\ref{table:result_rods4}. The proposed method achieves higher performance than ADI-FDTD subgridding because, like implicit methods, the proposed method is able to run above the CFL limit of the fine grid. However, its complexity is lower, since the fine grid is represented with a reduced-order model, and only explicit update operations are required at runtime.}

\begin{figure}[t]
	\centering
	\begin{tikzpicture}
	[label/.style={font=\fontsize{\sizefont}{\sizefont}\selectfont}]

\pgfmathsetmacro{\lenx}{5.28}
\pgfmathsetmacro{\leny}{3.2}
\pgfmathsetmacro{\PMLax}{1.2}
\pgfmathsetmacro{\PMLbx}{4.08}
\pgfmathsetmacro{\subgrax}{3.12}
\pgfmathsetmacro{\subgray}{1.28}
\pgfmathsetmacro{\subgrbx}{3.76}
\pgfmathsetmacro{\subgrby}{1.92}
\pgfmathsetmacro{\rodCoordAx}{3.28}
\pgfmathsetmacro{\rodCoordAy}{1.76}
\pgfmathsetmacro{\rodCoordBx}{3.28}
\pgfmathsetmacro{\rodCoordBy}{1.44}
\pgfmathsetmacro{\rodCoordCx}{3.6}
\pgfmathsetmacro{\rodCoordCy}{1.76}
\pgfmathsetmacro{\rodCoordDx}{3.6}
\pgfmathsetmacro{\rodCoordDy}{1.44}
\pgfmathsetmacro{\curx}{1.36}
\pgfmathsetmacro{\interfx}{1.52}
\pgfmathsetmacro{\rodR}{0.08}
\pgfmathtruncatemacro{\truelenxmm}{66}
\pgfmathtruncatemacro{\truelenymm}{40}
\pgfmathtruncatemacro{\dxCmm}{1}
\pgfmathtruncatemacro{\truedistInterfToSubgrmm}{20}
\pgfmathtruncatemacro{\truesubgrxmm}{8}
\pgfmathtruncatemacro{\truesubgrymm}{8}
\pgfmathtruncatemacro{\trueRodToRodNearestmm}{2}
\pgfmathtruncatemacro{\trueRodDiammm}{2}

	\pgfmathsetmacro{\sizefont} {9};
	
	\draw [fill = {rgb:black,25;green,50;blue,25}] (0, 0) rectangle (\lenx,\leny);
	\draw [fill = white] (\PMLax, 0) rectangle(\PMLbx, \leny);
	
	\draw[font=\fontsize{\sizefont}{\sizefont}\selectfont] (\PMLax/2,\leny/2) node {PML};
	\draw[font=\fontsize{\sizefont}{\sizefont}\selectfont] (\lenx - \PMLax/2,\leny/2) node {PML};
	
	\draw[dashed] (\subgrax, \subgray) rectangle (\subgrbx, \subgrby);
	\draw[label] (\subgrax/2 + \subgrbx/2, \subgrby) node [above] {\truesubgrxmm$\times$\truesubgrymm\ mm};
	
	\draw [fill = gray!50] (\rodCoordAx, \rodCoordAy) circle (\rodR);
	\draw [fill = gray!50] (\rodCoordBx, \rodCoordBy) circle (\rodR);
	\draw [fill = gray!50] (\rodCoordCx, \rodCoordCy) circle (\rodR);
	\draw [fill = gray!50] (\rodCoordDx, \rodCoordDy) circle (\rodR);
	
	\draw [thick, red] (\curx,0) -- (\curx, \leny);
	\draw [thick, gray!150] (\interfx,0) -- (\interfx, \leny);
	
	\draw [->,red,>=stealth, shorten >= 1pt] (\curx + 0.3, 0.85*\leny+0.2) -- (\curx, 0.85*\leny);
	\draw [label] (\curx + 0.3 -0.07, 0.85*\leny+0.2) node [right] {$J_y$ current};
	
	\draw [->,gray!150,>=stealth, shorten >= 1pt] (\interfx + 0.3, 0.73*\leny+0.2) -- (\interfx, 0.73*\leny);
	\draw [label] (\interfx + 0.3 -0.07, 0.73*\leny+0.2) node [right] {Probes};
	
	
	\draw [very thick] (0,0) -- (\lenx, 0);
	\draw [very thick] (0,\leny) -- (\lenx, \leny);
	\draw[label] (\lenx/2 - 0.6, \leny + 0.4) node {PEC};
	\draw[->,black,>=stealth, shorten >= 3pt, shorten <= 8pt] (\lenx/2 - 0.6, \leny + 0.4) -- (\lenx/2, \leny);
	
			\pgfmathsetmacro{\axisoffs}{\lenx/6}
			\draw[->,black,>=stealth] (\lenx + \axisoffs,\leny*0.7) -- (\lenx + \lenx/10+\axisoffs,\leny*0.7);
			\draw[->,black,>=stealth] (\lenx + \axisoffs,\leny*0.7) -- (\lenx + \axisoffs,\leny*0.7 + \lenx/10);
			\draw [fill=white] (\lenx + \axisoffs,\leny*0.7) circle (0.02*\lenx);
			\draw [fill=black] (\lenx + \axisoffs,\leny*0.7) circle (0.0075*\lenx);
			\draw[label] (\lenx + \axisoffs+\lenx/10,\leny*0.7) node [right] {x};
			\draw[label] (\lenx + \axisoffs,\leny*0.7 + \lenx/10) node [above] {y};
			\draw[label] (\lenx + \axisoffs,\leny*0.7) node [below left] {z};
	
	\pgfmathsetmacro{\d}{\leny/30};
	\draw[<->,black,>=stealth] (0,-\d) -- (\lenx, -\d);	
	\draw[label] (\lenx/2, -\d) node [below] {\truelenxmm\ mm};

	\draw[<->,black,>=stealth] (-\d,0) -- (-\d, \leny);	
	\draw[label] (-\d, \leny/2) node [rotate = 90, above] {\truelenymm\ mm};
	
	\draw[label] (\interfx,0) node [above right] {\begin{tabular}{l} $\dx$ = \dxCmm\ mm \\ $\dy$ = \dxCmm\ mm \end{tabular}};
	
	\draw[<->,black,>=stealth, shorten >= 0pt, shorten <= 0pt] (\interfx,\leny/2) -- (\subgrax, \leny/2);	
	\draw[label] (\interfx/2 + \subgrax/2, \leny/2) node [below] {\truedistInterfToSubgrmm\ mm};
	
	\pgfmathsetmacro{\offsx}{\lenx+\lenx/15};
	\pgfmathsetmacro{\offsy}{0};
	
	\pgfmathsetmacro{\subexpand}{2};
	\pgfmathsetmacro{\subgrlenx}{(\subgrbx - \subgrax)};
	\pgfmathsetmacro{\subgrleny}{(\subgrby - \subgray)};
	\pgfmathsetmacro{\subgrrodAx}{(\subexpand*\rodCoordAx - \subexpand*\subgrax + \offsx)};
	\pgfmathsetmacro{\subgrrodAy}{(\subexpand*\rodCoordAy - \subexpand*\subgray + \offsy)};
	\pgfmathsetmacro{\subgrrodBx}{(\subexpand*\rodCoordBx - \subexpand*\subgrax + \offsx)};
	\pgfmathsetmacro{\subgrrodBy}{(\subexpand*\rodCoordBy - \subexpand*\subgray + \offsy)};
	\pgfmathsetmacro{\subgrrodCx}{(\subexpand*\rodCoordCx - \subexpand*\subgrax + \offsx)};
	\pgfmathsetmacro{\subgrrodCy}{(\subexpand*\rodCoordCy - \subexpand*\subgray + \offsy)};
	\pgfmathsetmacro{\subgrrodDx}{(\subexpand*\rodCoordDx - \subexpand*\subgrax + \offsx)};
	\pgfmathsetmacro{\subgrrodDy}{(\subexpand*\rodCoordDy - \subexpand*\subgray + \offsy)};
	\draw[label] (\offsx + \subexpand*\subgrlenx/2, \offsy) node [below] {Scatterer};
	
\draw [fill = gray!50] (\subgrrodAx, \subgrrodAy) circle (\subexpand*\rodR);
\draw [fill = gray!50] (\subgrrodBx, \subgrrodBy) circle (\subexpand*\rodR);
\draw [fill = gray!50] (\subgrrodCx, \subgrrodCy) circle (\subexpand*\rodR);
\draw [fill = gray!50] (\subgrrodDx, \subgrrodDy) circle (\subexpand*\rodR);	
\draw [<->, >=stealth, shorten >= 2pt, shorten <= 2pt] (\subgrrodCx, \subgrrodCy-\rodR) -- (\subgrrodDx, \subgrrodDy+\rodR);
\draw [label] (\subgrrodCx, \subgrrodCy/2+\subgrrodDy/2) node [right] {\trueRodToRodNearestmm\ mm};
	\pgfmathsetmacro{\cos}{0.7071};
	\pgfmathsetmacro{\sin}{0.7071};
	 \draw [<->, >=stealth] (\subgrrodAx- \cos*\rodR*\subexpand, \subgrrodAy - \sin*\rodR*\subexpand) -- (\subgrrodAx+ \cos*\rodR*\subexpand, \subgrrodAy + \sin*\rodR*\subexpand);
	 
	 \draw [label] (\subgrrodAx+\cos*\rodR*\subexpand, \subgrrodAy + \sin*\rodR*\subexpand) node [above] {\trueRodDiammm\ mm}; 
	
	\end{tikzpicture}
	\caption{Layout of the four-rod reflection test in Sec.~\ref{sec:ne_rods4}. The dashed line shows the area modeled with a fine grid, and replaced by a reduced model in the proposed technique.}
	\label{fig:rods4}
\end{figure}
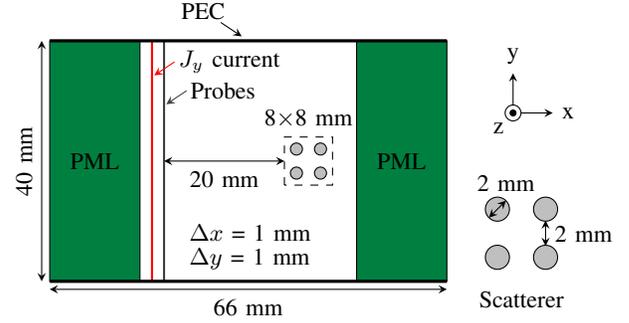

\begin{figure}[t]
	\centering

	\includegraphics[width=1\columnwidth]{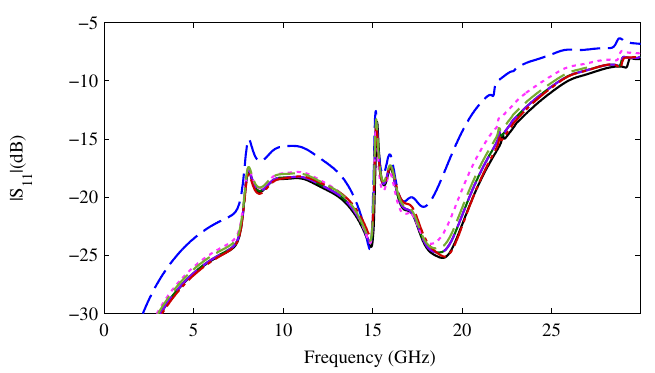}
	\includegraphics[width=1\columnwidth]{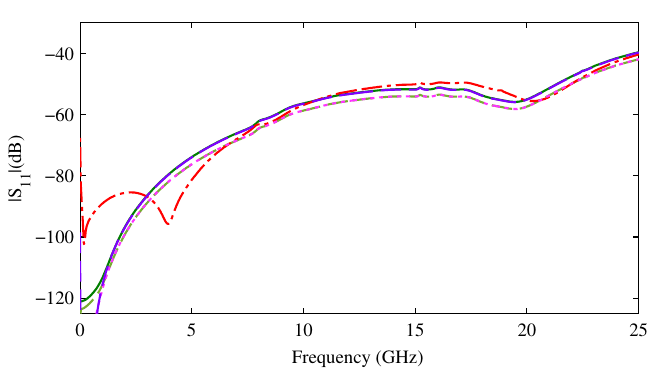}

	\caption{Reflected power with respect to the incident for the test in Sec.~\ref{sec:ne_rods4}. Top panel: reflections from the four-rod scatterer. Bottom panel: reflections with no rods inside. Results were computed with \added{the Yee scheme} with an all-coarse mesh~(\color{legendBlue} ~\st{$\;\;$}~\st{$\;\;\;$}~\st{$\;\;$}~\color{black}), an all-fine mesh (\color{black} ~\st{$\;\;\;\;\;\;\;$}~\color{black}), subgridding (\color{legendGreen}~\st{$\;\;\;\;\;\;$}~\color{black}), \added{ADI-FDTD subgridding~\cite{YangADIFDTDsub2D}} (\color{legendLightgreen}~\st{$\;\;\;$}~\st{$\;\;\;$}~\color{black}), the proposed method with only MOR (\color{legendPurple}~\st{$\;\;\;$}~\st{$\;\;\;$}~\color{black}), the proposed method with only CFL extension (\color{legendPink}~\st{$\;$}~\st{$\;$}~\st{$\;$}~\st{$\;$}~\color{black}), and the proposed method with both MOR and CFL limit extension (\color{legendRed}~\st{$\;$}~\st{$\;\;\;$}~\st{$\;$}~\color{black}).}

	\label{fig:result_4rods}
\end{figure}

\begin{table}[t]
	\centering
	\caption{Execution time for the four rods scattering test in Sec.~\ref{sec:ne_rods4}.}
	\begin{tabular}{ l l  l r r}
		\hline
		Method                              & CFL number& Runtime & Speed-up \\ \hline \hline
		\added{Yee scheme}, all-coarse grid                         & $0.99$  & 4.0 s     & -        \\ \hline 
		\added{Yee scheme}, all-fine grid   						  & $0.99$ & 654.2 s   & -        \\ \hline  

		Subgridding			                & $0.99$ 	&  50.5 s   & 12.9X   \\ \hline
	    \added{ADI-FDTD Subgridding~\cite{YangADIFDTDsub2D}}			                  & $2.97$  & 92.3 s &  7.1X\\ \hline
		Proposed (MOR only)		            & $0.99$    & 68.4 s    & 9.6X    \\ \hline 
		Proposed (CFL extension	only)		    & $2.97$	&  537.2 s   & 1.2X \\ \hline
		Proposed			            & $2.97$ 	&  23.8 s   & 27.5X   \\ \hline

	\end{tabular}
	\label{table:result_rods4}
\end{table}

\section{Conclusion}

In this paper, we proposed a stable FDTD scheme which supports the inclusion of reduced models to capture complex objects with fine features. Initially, complex objects are discretized with a refined mesh, in order to properly capture their geometry. Then, the FDTD equations for each refined region are compressed with model order reduction. The reduced models are finally coupled to the surrounding coarse mesh. The proposed method supports the embedding of multiple reduced models, losses and inhomogeneous material properties. The stability and CFL limit of the resulting scheme are rigorously proved, which is a novel result. The CFL limit can be also extended, with negligible accuracy loss, with a simple perturbation of the model coefficients. Numerical tests confirm the stability of the proposed method, and its potential to increase FDTD's efficiency for multiscale problems. Future works will extend the proposed ideas to the three-dimensional case.

\appendices
\section{\added{Proof of Theorem~\ref{thm:passivitycascade}}}
\label{sec:append}

\added{We prove Theorem~\ref{thm:passivitycascade} for the case of two FDTD-like subsystems being connected together using a fairly standard argument from the theory of dissipative systems~\cite{brogliato2006dissipative}. Since a system made by $N$ subsystems can be obtained by connecting one subsystem at a time, our proof demonstrates the Theorem for the general case.

We consider the configuration shown in Fig.~\ref{fig:sys1and2}. Two passive subsystems are connected to each other through some of their ports, while the other ports become the external ports of the resulting system, shown by the dashed box in Fig.~\ref{fig:sys1and2}. Since subsystems~1 and~2 are dissipative, they both individually satisfy the dissipation inequality~\cite{byrnes1994losslessness}
\begin{equation}
\E_1(\vect{x}_1^{n+1}) - \E_1(\vect{x}_1^{n}) \le s_1(\vect{y}_1^n, \vect{u}_1^{n+\half}) \,,
\label{eq:dissipineq1}
\end{equation}
\begin{equation}
\E_2(\vect{x}_2^{n+1}) - \E_2(\vect{x}_2^{n}) \le s_2(\vect{y}_2^n, \vect{u}_2^{n+\half})\,,
\label{eq:dissipineq2}
\end{equation}
where:
\begin{itemize}
	\item $\E_1(\vect{x}_1^{n})$ and $\E_2(\vect{x}_2^{n})$ are the storage functions of subsystem~1 and~2, respectively, which gives the energy stored inside each subsystem at time $n$;
	\item  $\vect{u}_1^{n+\half}$  and $\vect{y}_1^{n}$ are, respectively, the input and output of subsystem~1, and can be partitioned as
	\begin{equation}
	\vect{u}_1^{n+\half} = \begin{bmatrix}
	\vect{u}_{1,A}^{n+\half}\\
	\vect{u}_{1,B}^{n+\half}\\
	\end{bmatrix}\,, \qquad
	\vect{y}_1^{n} = \begin{bmatrix}
	\vect{y}_{1,A}^{n}\\
	\vect{y}_{1,B}^{n}\\
	\end{bmatrix}\\ \,,
	\end{equation}
	where vectors $\vect{u}_{1,A}^{n+\half}$ and $\vect{y}_{1,A}^{n}$ correspond to unconnected ports, while $\vect{u}_{1,B}^{n+\half}$ and $\vect{y}_{1,B}^{n}$ correspond to the ports connected to subsystem~2;
	\item $\vect{u}_2^{n+\half}$ and $\vect{y}_2^{n}$ are similarly defined for subsystem~2, and can be partitioned in the same way
	\begin{equation}
	\vect{u}_2^{n+\half} = \begin{bmatrix}
	\vect{u}_{2,B}^{n+\half}\\
	\vect{u}_{2,C}^{n+\half}\\
	\end{bmatrix}\,,\qquad
	\vect{y}_2^{n} = \begin{bmatrix}
	\vect{y}_{2,B}^{n}\\
	\vect{y}_{2,C}^{n}\\
	\end{bmatrix}\,,
	\end{equation}
  where $\vect{u}_{2,B}^{n+\half}$ and $\vect{y}_{2,B}^{n}$ correspond to connected ports, while $\vect{u}_{2,C}^{n+\half}$ and $\vect{y}_{2,C}^{n}$ correspond to unconnected ports;
	\item $s_1(.,.)$ and $s_2(.,.)$ are the supply rates of subsystem~1 and subsystem~2 respectively, which give the energy entering each subsystem between time $n$ and $n+1$, as discussed in~\cite{bekmambetova2016dissipative}. Taking into account the partitioning of the inputs and outputs of the two subsystems, the two supply rates can be written as
	\begin{equation}
	s_1(\vect{y}_1^n, \vect{u}_1^{n+\half}) =  s_{1,A}(\vect{y}_{1,A}^n, \vect{u}_{1,A}^{n+\half})+ s_{1,B}(\vect{y}_{1,B}^n, \vect{u}_{1,B}^{n+\half})\,,
	\end{equation}
	\begin{equation}
	s_2(\vect{y}_2^n, \vect{u}_2^{n+\half}) =  s_{2,B}(\vect{y}_{2,B}^n, \vect{u}_{2,B}^{n+\half})+ s_{2,C}(\vect{y}_{2,C}^n, \vect{u}_{2,C}^{n+\half})\,.
	\end{equation}
\end{itemize}
\begin{figure}[t]
	\centering
	\begin{tikzpicture}
	[scale = 1]

	\pgfmathsetmacro{\sizefont} {9}
	
	\pgfmathsetmacro{\W}{1.95}
	\pgfmathsetmacro{\H}{2}
	\pgfmathsetmacro{\Rbig}{0.08}
	
	\draw (1.5*\W,0) -- (1.5*\W,1.2*\H); 
	\draw (1.5*\W,0) -- (2.5*\W,0*\H);
	\draw (1.5*\W,1.2*\H) -- (2.5*\W,1.2*\H);
	\draw (2.5*\W,0) -- (2.5*\W,1.2*\H);

	\draw  (-1*\W,0.6*\H) circle (\Rbig);
	\draw (-1*\W,0.6*\H) -- (-0.5*\W,0.6*\H);
	\draw  (-1*\W,0.4*\H) circle (\Rbig);
	\draw (-1*\W,0.4*\H) -- (-0.5*\W,0.4*\H);
	
	\draw (0.5*\W,0.3*\H) -- (0.8*\W,0.3*\H);
	\draw (0.5*\W,0.9*\H) -- (0.8*\W,0.9*\H);
	\draw (-1*\W,0.6*\H) -- (-1*\W,0.6*\H);
	
	\draw (-0.5*\W,0) -- (0.5*\W,0);
	\draw (-0.5*\W,1.2*\H) -- (0.5*\W,1.2*\H);
	\draw (-0.5*\W,0) -- (-0.5*\W,1.2*\H);
	\draw (0.5*\W,0*\H) -- (0.5*\W,1.2*\H);
	
	\draw (0.5*\W,0.3*\H) -- (1.5*\W,0.3*\H);
	\draw (0.5*\W,0.9*\H) -- (1.5*\W,0.9*\H);
	\draw (0.5*\W,0.6*\H) -- (1.5*\W,0.6*\H);

	\draw  (0.8*\W, 0.3*\H) circle (\Rbig);
	
	\draw  (0.8*\W, 0.9*\H) circle (\Rbig);
	
	\draw (1.2*\W,0.3*\H) circle (\Rbig);
	\draw (1.2*\W,0.9*\H) circle (\Rbig);

	\draw (1.2*\W,0.6*\H) circle (\Rbig);
	\draw (0.8*\W,0.6*\H) circle (\Rbig);

	\draw (2.5*\W,0.8*\H) -- (3.2*\W,0.8*\H);
	\draw (3.2*\W,0.8*\H) circle (\Rbig);		
	\draw (2.5*\W,0.6*\H) -- (3.2*\W,0.6*\H);
	\draw (3.2*\W,0.6*\H) circle (\Rbig);
	\draw (2.5*\W,0.4*\H) -- (3.2*\W,0.4*\H);
	\draw (3.2*\W,0.4*\H) circle (\Rbig);

	\draw[label] (-0.15*\W,0.58*\H) node [above] {~~~~Subsystem 1 };
	\draw[label] (2.0*\W,0.58*\H) node [above] {~Subsystem 2 };
	\draw[label] (-1*\W,0.6*\H) node [above]{$\vect{u}_{1,A}^{n+\half}$};
	\draw[label] (0.8*\W,0.9*\H) node [above]{$\vect{u}_{1,B}^{n+\half}$};
	\draw[label] (1.2*\W,0.9*\H) node [above]{$\vect{u}_{2,B}^{n+\half}$};
	\draw[label] (2.8*\W,0.8*\H) node [above]{$\vect{u}_{2,C}^{n+\half}$};
	\draw[dashed] (-0.8*\W,-0.4*\H) rectangle (3*\W,1.5*\H);
	
	\end{tikzpicture}
	
	\caption{\added{Configuration considered in Appendix~\ref{sec:append} to prove Theorem~\ref{thm:passivitycascade}.}}
		\label{fig:sys1and2}

\end{figure}

Once the two subsystems are connected together, the cascaded system will have as inputs and outputs
	\begin{equation}
	\vect{u}^{n+\half} = \begin{bmatrix}
	\vect{u}_{1,A}^{n+\half}\\
	\vect{u}_{2,C}^{n+\half} \\
	\end{bmatrix} \,,
	\vect{y}^{n} = \begin{bmatrix}
	\vect{y}_{1,A}^{n}\\
	\vect{y}_{2,C}^{n} \\
	\end{bmatrix} \,,
	\end{equation}
	as also visible in Fig.~\ref{fig:sys1and2}.
	In order to prove that the cascaded system is passive, we have to define its own storage function $\E(\vect{x}^{n})$, supply rate $s(\vect{y}^n, \vect{u}^{n+\half})$, and prove that they satisfy the dissipation inequality~\cite{byrnes1994losslessness}
	\begin{equation}
		\E(\vect{x}^{n+1}) - \E(\vect{x}^{n}) \le s(\vect{y}^n, \vect{u}^{n+\half})\,.
		\label{eq:dissipineq}
	\end{equation}
	As storage function, we take 
	\begin{equation}
	\E(\vect{x}^{n}) = \E_1(\vect{x}_1^{n})+\E_2(\vect{x}_2^{n}) \,,
	\end{equation}
	since obviously the energy stored in the cascaded system is the sum of the energy stored in each subsystem. The supply rate is defined as 
	\begin{multline}
	s(\vect{y}^n, \vect{u}^{n+\half}) =  
	s_{1,A}(\vect{y}_{1,A}^n, \vect{u}_{1,A}^{n+\half})+s_{2,C}(\vect{y}_{2,C}^n, \vect{u}_{2,C}^{n+\half})\,,
	\label{eq:supplySys}
    \end{multline}
    because the energy entering into the cascaded system is the sum of the energy entering from the leftmost ports and the rightmost ports in Fig.~\ref{fig:sys1and2}. 
    
    We now prove~\eqref{eq:dissipineq}. Adding~\eqref{eq:dissipineq1} and~\eqref{eq:dissipineq2} side by side, we obtain
      \begin{multline}
    \E(\vect{x}^{n+1}) - \E(\vect{x}^{n})  \le s_1(\vect{y}_1^n, \vect{u}_1^{n+\half}) + s_2(\vect{y}_2^n, \vect{u}_2^{n+\half}) \\
    = s_{1,A}(\vect{y}_{1,A}^n, \vect{u}_{1,A}^{n+\half})+s_{1,B}(\vect{y}_{1,B}^n, \vect{u}_{1,B}^{n+\half}) \\
    +s_{2,B}(\vect{y}_{2,B}^n, \vect{u}_{2,B}^{n+\half})+s_{2,C}(\vect{y}_{2,C}^n, \vect{u}_{2,C}^{n+\half})\,.
    \label{eq:chain}
   \end{multline}
	Using the expression for the supply rate of an FDTD-like system given in~\cite{bekmambetova2016dissipative}, one can easily show that
	\begin{equation}
	s_{1,B}(\vect{y}_{1,B}^n, \vect{u}_{1,B}^{n+\half}) 
	+s_{2,B}(\vect{y}_{2,B}^n, \vect{u}_{2,B}^{n+\half}) = 0 \,,
	\label{eq:cancelSys}
	\end{equation}
	which physically means that all the energy leaving subsystem~1 from the connected ports will enter subsystem~2. Substituting~\eqref{eq:cancelSys} into~\eqref{eq:chain}, we obtain~\eqref{eq:dissipineq}. The cascaded system is then passive and, consequently, does not have any unstable poles outside the unit circle in the complex plane~\cite{hitz1969discrete}. }

\bibliographystyle{IEEEtran}
\bibliography{IEEEabrv,bibliography}

\begin{thebibliography}{10}
\providecommand{\url}[1]{#1}
\csname url@samestyle\endcsname
\providecommand{\newblock}{\relax}
\providecommand{\bibinfo}[2]{#2}
\providecommand{\BIBentrySTDinterwordspacing}{\spaceskip=0pt\relax}
\providecommand{\BIBentryALTinterwordstretchfactor}{4}
\providecommand{\BIBentryALTinterwordspacing}{\spaceskip=\fontdimen2\font plus
\BIBentryALTinterwordstretchfactor\fontdimen3\font minus
  \fontdimen4\font\relax}
\providecommand{\BIBforeignlanguage}[2]{{%
\expandafter\ifx\csname l@#1\endcsname\relax
\typeout{** WARNING: IEEEtran.bst: No hyphenation pattern has been}%
\typeout{** loaded for the language `#1'. Using the pattern for}%
\typeout{** the default language instead.}%
\else
\language=\csname l@#1\endcsname
\fi
#2}}
\providecommand{\BIBdecl}{\relax}
\BIBdecl

\bibitem{yee}
K.~Yee, ``{Numerical solution of initial boundary value problems involving
  Maxwell's equations in isotropic media},'' \emph{{IEEE} Trans. Antennas
  Propag.}, vol.~14, no.~3, pp. 302--307, 1966.

\bibitem{Gedney}
S.~D. Gedney, \emph{Introduction to the {Finite-Difference} Time-Domain (FDTD)
  Method for Electromagnetics}, 1st~ed.\hskip 1em plus 0.5em minus 0.4em\relax
  San Rafael, CA: Morgan \& Claypool Publishers, 2011.

\bibitem{boonzaaler1992radiation}
J.~Boonzaaler and C.~Pistorius, ``{Radiation and scattering by thin wires with
  a dielectric coating—a finite-difference time-domain approach},''
  \emph{Microw. Opt. Technol. Lett.}, vol.~5, no.~6, pp. 288--291, 1992.

\bibitem{berenger2000multiwire}
J.-P. Berenger, ``{A multiwire formalism for the FDTD method},'' \emph{{IEEE}
  Trans. Electromagn. Compat.}, vol.~42, no.~3, pp. 257--264, 2000.

\bibitem{maloney1992efficient}
J.~G. Maloney and G.~S. Smith, ``{The efficient modeling of thin material
  sheets in the finite-difference time-domain (FDTD) method},'' \emph{{IEEE}
  Trans. Antennas Propag.}, vol.~40, no.~3, pp. 323--330, 1992.

\bibitem{ma1997comparison}
K.-P. Ma, M.~Li, J.~L. Drewniak, T.~H. Hubing, and T.~P. Van~Doren,
  ``{Comparison of FDTD algorithms for subcellular modeling of slots in
  shielding enclosures},'' \emph{{IEEE} Trans. Electromagn. Compat.}, vol.~39,
  no.~2, pp. 147--155, 1997.

\bibitem{thoma1996consistent}
P.~Thoma and T.~Weiland, ``A consistent subgridding scheme for the finite
  difference time domain method,'' \emph{Int. J. Numer. Model El.}, vol.~9,
  no.~5, pp. 359--374, 1996.

\bibitem{xiao2007three}
K.~Xiao, D.~J. Pommerenke, and J.~L. Drewniak, ``A three-dimensional {FDTD}
  subgridding algorithm with separated temporal and spatial interfaces and
  related stability analysis,'' \emph{{IEEE} Trans. Antennas Propag.}, vol.~55,
  no.~7, pp. 1981--1990, 2007.

\bibitem{wang2010analysis}
Y.~Wang, S.~Langdon, and C.~Penney, ``{Analysis of accuracy and stability of
  FDTD subgridding schemes},'' in \emph{2010 European Microwave Conf.}, 2010,
  pp. 1297--1300.

\bibitem{adi-fdtd-2}
F.~Zheng, Z.~Chen, and J.~Zhang, ``A {Finite-Difference} time-domain method
  without the {Courant} stability conditions,'' \emph{{IEEE} Microwave Guided
  Wave Lett.}, vol.~9, no.~11, pp. 441--443, Nov 1999.

\bibitem{adi-fdtd}
T.~Namiki, ``A new {FDTD} algorithm based on alternating-direction implicit
  method,'' \emph{{IEEE} Microw. Wireless Compon. Lett.}, vol.~47, no.~10, pp.
  2003--2007, 1999.

\bibitem{lee2004some}
J.~Lee and B.~Fornberg, ``Some unconditionally stable time stepping methods for
  the 3d maxwell's equations,'' \emph{J. Comput. Appl. Math.}, vol. 166, no.~2,
  pp. 497--523, 2004.

\bibitem{WangADIFDTDsubgridding}
B.-Z. Wang, Y.~Wang, W.~Yu, and R.~Mittra, ``{A hybrid 2-D ADI-FDTD subgridding
  scheme for modeling on-chip interconnects},'' \emph{IEEE Transactions on
  Advanced Packaging}, vol.~24, no.~4, pp. 528--533, Nov 2001.

\bibitem{ChenADIFDTDsub3D}
Z.~Chen and I.~Ahmed, ``{3D hybrid ADI-FDTD/FDTD subgridding scheme applied to
  RF/microwave and optical structures},'' in \emph{IEEE Antennas and
  Propagation Society Symposium}, vol.~2, June 2004, pp. 1684--1687.

\bibitem{YangADIFDTDsub2D}
S.~Yang, Y.~Yu, Z.~Chen, and W.~Y. Yin, ``{A subgridding scheme using hybrid
  one-step leapfrog ADI-FDTD and FDTD methods},'' in \emph{IEEE/MTT-S
  International Microwave Symposium Digest}, June 2012, pp. 1--3.

\bibitem{cangellaris1999rapid}
A.~C. Cangellaris and L.~Zhao, ``{Rapid FDTD simulation without time
  stepping},'' \emph{{IEEE} Microw. Guided Wave Lett.}, vol.~9, no.~1, pp.
  4--6, 1999.

\bibitem{denecker}
B.~Denecker, F.~Olyslager, L.~Knockaert, and D.~De~Zutter, ``Generation of
  {FDTD} subcell equations by means of reduced order modeling,'' \emph{{IEEE}
  Trans. Antennas Propag.}, vol.~51, no.~8, pp. 1806--1817, 2003.

\bibitem{kulas2003reduced}
L.~Kulas and M.~Mrozowski, ``{Reduced order models of refined Yee's cells},''
  \emph{{IEEE} Microw. Wireless Compon. Lett.}, vol.~13, no.~4, pp. 164--166,
  2003.

\bibitem{cnf-2015-nemo-fdtdmor}
X.~Li and P.~Triverio, ``{Stable FDTD Simulations with Subgridding at the Time
  Step of the Coarse Grid: a Model Order Reduction Approach},'' in \emph{IEEE
  MTT-S Int. Conf. on Numerical Electromagnetic and Multiphysics Modeling and
  Optimization}, Ottawa, Canada, August 11--14 2015.

\bibitem{SpatialFiltering}
C.~Chang and C.~D. Sarris, ``{A Spatially Filtered {Finite-Difference}
  Time-Domain Scheme With Controllable Stability Beyond the {CFL} Limit: Theory
  and Applications},'' \emph{{IEEE} Trans. Microw. Theory Techn.}, vol.~61,
  no.~1, pp. 351--359, Jan 2013.

\bibitem{DJiao}
M.~Gaffar and D.~Jiao, ``An explicit and unconditionally stable {FDTD} method
  for {3-D} electromagnetic analysis,'' in \emph{2013 {IEEE MTT-S} Int.
  Microwave Symp. Dig.}, June 2013.

\bibitem{kulas2004stability}
L.~Kulas and M.~Mrozowski, ``{Stability of the FDTD scheme containing
  macromodels},'' \emph{{IEEE} Microw. Wireless Compon. Lett.}, vol.~14,
  no.~10, pp. 484--486, 2004.

\bibitem{cnf-2016-epeps-fdtdmor}
{X. Zhang and F. Bekmambetova and P. Triverio}, ``{Reduced Order Modeling in
  FDTD with Provable Stability beyond the CFL Limit},'' in \emph{25th IEEE
  Conference on Electrical Performance of Electronic Packaging and Systems},
  San Diego, CA, Oct. 23-26 2016.

\bibitem{cnf-2014-ims-fdtd}
X.~Li, C.~D. Sarris, and P.~Triverio, ``Overcoming the {FDTD} stability limit
  via model order reduction and eigenvalue perturbation,'' in \emph{2014 {IEEE
  MTT-S} Int. Microwave Symp. Dig.}, June 2014, pp. 1--3.

\bibitem{cnf-2014-aps-fdtd}
------, ``Stability preserving model order reduction of {FDTD} with stability
  enforcement beyond the {CFL} limit,'' in \emph{2014 {IEEE} Int. Symp. on
  Antennas and Propag.}, July 2014, pp. 1--2.

\bibitem{jnl-2014-tmtt-fdtd}
------, ``{Structure-Preserving Reduction of Finite-Difference Time-Domain
  Equations with Controllable Stability Beyond the CFL Limit},'' \emph{{IEEE}
  Trans. Microw. Theory Techn.}, vol.~62, no.~12, pp. 3228--3238, 2014.

\bibitem{bekmambetova2016dissipative}
F.~Bekmambetova, X.~Zhang, and P.~Triverio, ``{A Dissipative Systems Theory for
  FDTD With Application to Stability Analysis and Subgridding},'' \emph{IEEE
  Transactions on Antennas and Propagation}, vol.~65, no.~2, pp. 751--762, Feb
  2017.

\bibitem{venkatarayalu2007stable}
N.~V. Venkatarayalu, R.~Lee, Y.-B. Gan, and L.-W. Li, ``{A stable FDTD
  subgridding method based on finite element formulation with hanging
  variables},'' \emph{{IEEE} Trans. Antennas Propag.}, vol.~55, no.~3, pp.
  907--915, 2007.

\bibitem{SPRIM}
R.~W. Freund, ``{SPRIM}: Structure-preserving reduced-order interconnect
  macromodeling,'' in \emph{Proceedings of the 2004 IEEE/ACM International
  Conference on Computer-aided Design}, ser. ICCAD '04.\hskip 1em plus 0.5em
  minus 0.4em\relax Washington, DC, USA: IEEE Computer Society, 2004, pp.
  80--87.

\bibitem{cnf-2016-epeps-fdtdpassivity}
{F. Bekmambetova and X. Zhang and P. Triverio}, ``{A Passivity Approach to FDTD
  Stability with Application to Interconnect Modeling},'' in \emph{25th IEEE
  Conference on Electrical Performance of Electronic Packaging and Systems},
  San Diego, CA, Oct. 23-26 2016.

\bibitem{jnl-2007-tadvp-fundamentals}
{P. Triverio, S. Grivet-Talocia, M. S. Nakhla, F. Canavero, R. Achar},
  ``{Stability, causality, and passivity in electrical interconnect models},''
  \emph{{IEEE} Trans. Adv. Packag.}, vol.~30, no.~4, pp. 795--808, 2007.

\bibitem{Gol96}
G.~H. Golub and C.~F.~V. Loan, \emph{Matrix Computations}.\hskip 1em plus 0.5em
  minus 0.4em\relax Johns Hopkins University Press, 1996.

\bibitem{brogliato2006dissipative}
B.~Brogliato, R.~Lozano, B.~Maschke, and O.~Egeland, \emph{{Dissipative Systems
  Analysis and Control: Theory and Applications}}.\hskip 1em plus 0.5em minus
  0.4em\relax Springer Science \& Business Media, 2006.

\bibitem{byrnes1994losslessness}
C.~Byrnes and W.~Lin, ``{Losslessness, feedback equivalence, and the global
  stabilization of discrete-time nonlinear systems},'' \emph{{IEEE} Trans.
  Autom. Control}, vol.~39, no.~1, pp. 83--98, 1994.

\bibitem{hitz1969discrete}
L.~Hitz and B.~Anderson, ``Discrete positive-real functions and their
  application to system stability,'' in \emph{Proceedings of the Institution of
  Electrical Engineers}, vol. 116, no.~1.\hskip 1em plus 0.5em minus
  0.4em\relax IET, 1969, pp. 153--155.

\end{thebibliography}

\begin{IEEEbiography}[{\includegraphics[width=1in,height=1.25in,clip,keepaspectratio]{./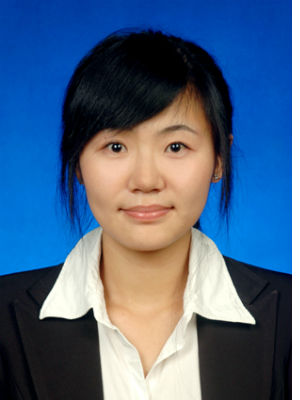}}]{Xinyue Zhang} (S'16) received her B.Sc. and M.S. degree in electrical engineering from Harbin Institute of Technology, Harbin, China, in 2012 and 2014, respectively. Currently, she is working towards a Ph.D. degree in electrical engineering from the University of Toronto, Toronto, ON, Canada. Her research interests include computational electromagnetics, model order reduction, and wireless communication. She received the Best Paper Award of 6th International ICST Conference on Communications and Networking, 2011.
\end{IEEEbiography}

\begin{IEEEbiography}[{\includegraphics[width=1in,height=1.25in,clip,keepaspectratio]{./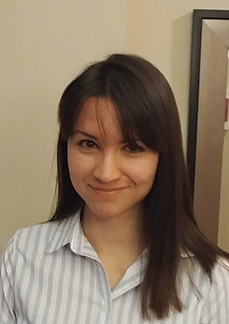}}]{Fadime Bekmambetova} (S'16)
	completed the B.A.Sc degree in Engineering Science at the University of Toronto in 2016 and is currently pursuing the M.A.Sc degree at the same university. In 2016, she received the Best Student Paper Award at the 25th IEEE Conference on Electronic Packages and Systems. 
\end{IEEEbiography}

\begin{IEEEbiography}[{\includegraphics[width=1in,height=1.25in,clip,keepaspectratio]{./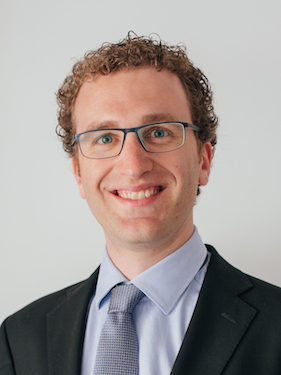}}]{Piero Triverio} (S'06 -- M'09 -- SM'16)
	received the M.Sc. and Ph.D. degrees in Electronic Engineering from Politecnico di Torino, Italy in 2005 and 2009, respectively. He is an Assistant Professor with the Department of Electrical and Computer Engineering at the University of Toronto, where he holds the Canada Research Chair in Modeling of Electrical Interconnects.
	His research interests include signal integrity, electromagnetic compatibility, and model order reduction. He received several international awards, including the 2007 Best Paper Award of the IEEE Transactions on Advanced Packaging, the EuMIC Young Engineer Prize at the 13th European Microwave Week, and the Best Paper Award at the IEEE 17th Topical Meeting on Electrical Performance of Electronic Packaging.
\end{IEEEbiography}

\end{document}